\newcommand{\cm}{\ding{51}}%
\newcommand{\xm}{\ding{55}}%
\newtheorem{lemma}{Lemma}
\newtheorem{theorem}{Theorem}
\definecolor{Paired-1}{RGB}{31,120,180}
\definecolor{Paired-2}{RGB}{166,206,227}
\definecolor{Paired-3}{RGB}{51,160,44}
\definecolor{Paired-4}{RGB}{178,223,138}
\definecolor{Paired-5}{RGB}{227,26,28}
\definecolor{Paired-6}{RGB}{251,154,153}
\definecolor{Paired-7}{RGB}{255,127,0}
\definecolor{Paired-8}{RGB}{253,191,111}
\definecolor{Paired-9}{RGB}{106,61,154}
\definecolor{Paired-10}{RGB}{202,178,214}
\definecolor{Paired-11}{RGB}{177,89,40}
\definecolor{Paired-12}{RGB}{255,255,153}
\definecolor{Dark2-1}{RGB}{27,158,119}
\definecolor{Dark2-2}{RGB}{217,95,2}
\definecolor{Dark2-3}{RGB}{117,112,179}
\definecolor{Dark2-4}{RGB}{231,41,138}
\definecolor{Dark2-5}{RGB}{102,166,30}
\definecolor{Dark2-6}{RGB}{230,171,2}
\definecolor{Dark2-7}{RGB}{166,118,29}
\definecolor{Dark2-8}{RGB}{102,102,102}
\DeclareRobustCommand{\hlsta}[1]{{\sethlcolor{Paired-5!15}\hl{#1}}}
\DeclareRobustCommand{\hlless}[1]{{\sethlcolor{Paired-7!15}\hl{#1}}}
\newcommand{\otac}[1][]{\ifthenelse{\isempty{#1}}{OTAC\xspace}{OTAC~(#1)\xspace}}
\newcommand{\fertac}{FERTAC\xspace}   
\newcommand{\twocatac}{2CATAC\xspace} 
\newcommand{\herad}{HeRAD\xspace}     
\newcommand{\tasks}{\mathcal{T}}
\newcommand{\solution}{S}
\newcommand{\tbig}{\mathcal{B}}
\newcommand{\tlittle}{\mathcal{L}}
\newcommand{\msol}{\mathsf{S}}
\newcommand{\spu}{StreamPU\xspace}
\newcommand{\MIPP}{\textsc{Mipp}\xspace}
\newcommand{\CFS}{\textit{CFS}\xspace}
\newcommand{\EEVDF}{\textit{EEVDF}\xspace}
\newcommand{\schedext}{\textit{sched\_ext}\xspace}
\newcommand{\LAVD}{\textit{LAVD}\xspace}
\newcommand{\BPFLand}{\textit{BPFLand}\xspace}
\DeclareMathOperator*{\dist}{dist}
\newcommand{\StaL}[3][]{\ifthenelse{\isempty{#1}}
                                   {(#2,#3_\mathcal{L})}
                                   {\ifthenelse{\equal{#3}{\string 1}}
                                               {
                                                 $\hlsta{$(#2,#3_\mathcal{L})$}$%
                                               }
                                               {
                                                 $\hlless{$(#2,#3_\mathcal{L})$}$%
                                               }
                                   }
                        }
\newcommand{\StaB}[3][]{\ifthenelse{\isempty{#1}}
                                   {(#2,#3_\mathcal{B})}
                                   {\ifthenelse{\equal{#3}{\string 1}}
                                               {
                                                 $\hlsta{$(#2,#3_\mathcal{B})$}$%
                                               }
                                               {
                                                 $\hlless{$(#2,#3_\mathcal{B})$}$%
                                               }
                                   }
                        }
\journal{Parallel Computing}
\begin{document}

\begin{frontmatter}

\title{Energy-Aware Scheduling Strategies for Partially-Replicable Task Chains\\
  on Heterogeneous Processors}

\author[lip6]{Yacine Idouar}
\ead{yacine.idouar@lip6.fr}
\author[lip6]{Adrien Cassagne}
\ead{adrien.cassagne@lip6.fr}
\author[inria]{Laércio L. Pilla}
\author[lip6]{Julien Sopena}
\author[lip6]{Manuel Bouyer}
\author[inria]{Diane Orhan}
\author[lip6]{\\Lionel Lacassagne}
\author[lip6]{Dimitri Galayko}
\author[inp]{Denis Barthou}
\author[ubdx]{Christophe Jégo}

\affiliation[lip6]{organization={LIP6, Sorbonne University, CNRS, UMR7606},
  city={Paris},
  country={France}}

\affiliation[inria]{organization={University of Bordeaux, CNRS, Inria, LaBRI,
    UMR5800},
  city={Talence},
  country={France}}

\affiliation[inp]{organization={Bordeaux INP},
  city={Talence},
  country={France}}

\affiliation[ubdx]{organization={University Bordeaux, CNRS, Bordeaux INP, IMS,
    UMR5218},
  city={Talence},
  country={France}}

\begin{abstract}
The arrival of heterogeneous (or hybrid) multicore architectures has brought new
performance trade-offs for applications, and efficiency opportunities to
systems. They have also increased the challenges related to thread scheduling,
as tasks' execution times will vary depending if they are placed on big
(performance) cores or little (efficient) ones. In this paper, we focus on the
challenges heterogeneous multicore processors bring to partially-replicable task
chains, such as the ones that implement digital communication standards in
Software-Defined Radio (SDR). Our objective is to maximize the throughput of
these task chains while also minimizing their power consumption. We model this
problem as a pipelined workflow scheduling problem using pipelined and
replicated parallelism on two types of resources whose objectives are to
minimize the period and to use as many little cores as necessary. We propose two
greedy heuristics (FERTAC and 2CATAC) and one optimal dynamic programming
(HeRAD) solution to the problem. We evaluate our solutions and compare the
quality of their schedules (in period and resource utilization) and their
execution times using synthetic task chains. We also study an open source
implementation of the DVB-S2 communication standard based on the StreamPU
runtime. Leading processor vendors are covered with ARM, Apple, AMD, and Intel
platforms. Both the achieved throughput and the energy consumption are
evaluated. Our results demonstrate the benefits and drawbacks of the different
proposed solutions. On average, FERTAC and 2CATAC achieve near-optimal
solutions, with periods that are less than 10\% worse than the optimal (HeRAD)
using fewer than 2 extra cores. These three scheduling strategies now enable
programmers and users of StreamPU to transparently make use of heterogeneous
multicore processors and achieve a throughput that differs from its theoretical
maximum by less than 6\% on average. On the DVB-S2 receiver, it is also shown
that the heterogeneous solutions outperform the best homogeneous ones in terms
of energy efficiency by 8\% on average.
\end{abstract}

\begin{keyword}
Throughput optimization \sep period minimization \sep heterogeneous
architectures \sep big.LITTLE \sep energy efficiency \sep pipelining \sep
replication \sep streaming \sep software-defined radio \sep SDR.
\end{keyword}

\end{frontmatter}

\section{Introduction}\label{sec:intro}

Multicore processor architectures composed of different types of cores (also
known as \textit{heterogeneous}, \textit{hybrid}, or \textit{asymmetric}) are
increasingly common nowadays. What may have started on low power processors with
ARM’s big.LITTLE architecture in 2011~\cite{randhawa2013software}, has now
become present in processors produced by Apple (since 2020), Intel (since
2021)~\cite{rotem2022intel}, and AMD (since 2023). A common feature in these
processors is an ISA shared between the high-performance (or \textit{big}) and
the high-efficiency (or \textit{little}) cores, which enables the execution of
an application in both types of cores transparently.

Heterogeneous multicore architectures have multiple advantages, such as
providing the opportunity to save energy by turning off big cores when
unnecessary (for battery or environmental reasons). They have also been shown to
outperform homogeneous architectures under a fixed budget (be it area, power, or
both)~\cite{kumar2003single,rodrigues2011performance}. We invite the reader to
check the survey by Mittal on these processors~\cite{mittal2016survey} for more
information. These advantages come with the drawback of higher complexity when
programming parallel applications for these architectures, as one has to decide
how to balance the workload between different cores and core types. In other
words, a solution that does not take care of their differences can result in
lower performance and higher energy consumption.

In this context, we focus on the special characteristics of a kind of parallel
application composed of partially-replicable task chains. A task chain
represents a sequence of tasks with linear dependencies (i.e., all tasks have
one successor and one predecessor, except the source and sink tasks). A task
chain is partially-replicable when only some of its tasks can be replicated
(i.e., multiple copies of the same task can work on different datasets at the
same time to improve throughput). These task chains are found, for instance, in
the context of digital communication standards in Software-Defined Radio
(SDR)~\cite{cassagne2023spu}. We consider the problem of scheduling these
streaming task chains on heterogeneous multicore processors to optimize their
throughput and power consumption in a transparent manner, reducing complexity
for programmers and end users.

A preliminary version of this work appeared in the 34\textsuperscript{th}
Heterogeneity in Computing Workshop (HCW 2025) as part of the
39\textsuperscript{th} IEEE International Parallel and Distributed Processing
Symposium~\cite{orhan2025a}. In it, we (I) provide a formulation of the
optimization problem; (II) propose two greedy heuristics and one optimal dynamic
programming solution; and (III) evaluate the performance of these solutions
using simulation and a real-world digital communication standard
(DVB-S2~\cite{dvbs2}). In this paper, we expand the preliminary
work~\cite{orhan2025a} with the following contributions:

\begin{itemize}
  \item We illustrate the proposed scheduling algorithms through comprehensive
    examples;
  \vspace{-0.2cm}
  \item We provide a proof of optimality of the dynamic programming solution;
  \vspace{-0.2cm}
  \item We give additional information about the \spu runtime used for the
    real-world experimentation, in particular about its internal implementation;
  \vspace{-0.2cm}
  \item We consider two additional platforms based on ARM and AMD processors;
  \vspace{-0.2cm}
  \item We evaluate the power consumption and energy efficiency of the different
    solutions;
  \vspace{-0.2cm}
  \item We measure the impact of different thread placement (pinning) strategies
    on throughput;
  \vspace{-0.2cm}
  \item We implement a new baseline based on a thread-per-block schedule managed
    by the operating system.
\end{itemize}

The remaining sections of this paper are organized as follows:
Section~\ref{sec:rw} discusses related works. Section~\ref{sec:def} presents a
formulation of the problem. Sections~\ref{sec:heuristics} and \ref{sec:dp}
detail our proposed scheduling solutions. Section~\ref{sec:sim} covers
simulations and analytic results while Section~\ref{sec:runtime} introduces
important runtime features. Finally, Section~\ref{sec:sdr} provides real-world
experiments on the DVB-S2 SDR receiver and Section~\ref{sec:conclusion}
concludes this paper.

\section{Related Work}\label{sec:rw}

Our main interest lies in the problem of throughput optimization for
partially-replicable task chains. We focus on solutions using pipeline and
replicated parallelism, and interval mapping~\cite{benoit2013asurvey}.
Pipelining improves throughput by enabling the execution of different dependent
tasks over different datasets simultaneously, adding parallelism to an otherwise
sequential execution. Meanwhile, replication improves throughput by letting
multiple copies of a task work on different datasets at the same time. We
provide more information about these forms of parallelism and interval mapping
in Section~\ref{sec:def}.

In a preliminary version of this work~\cite{orhan2025a}, we have proposed three
solutions to this problem, and evaluated their performance using simulation and
a real-world digital communication standard. In this work, we extend this
evaluation to new platforms, new baselines, new power consumption and energy
efficiency measurements, and new thread pinning strategies.

As we are unaware of any other solutions to our specific research problem in the
state of the art, we will focus our discussion here on variations of this
problem.

\textbf{Throughput on homogeneous architectures:}
\otac~\cite{orhan2025b} provides an optimal solution for partially-replicable
task chains using pipeline and replicated parallelism. We provide more details
about \otac in Section~\ref{sec:heuristics}, as our two greedy heuristics are
based on its main ideas. \otac itself is inspired by Nicol's
algorithm~\cite{nicol1994rectilinear,pinar2004fast}, which is an optimal
solution for the Chain-to-chains partitioning (CCP) problem where only
pipelining is possible. Finally, when all tasks are replicable, the optimal
solution in homogeneous resources is to build a pipeline with a single stage
that is replicated across all resources~\cite{benoit2010complexity}.
Nonetheless, this does not apply for heterogeneous architectures. If the goal is
to improve pipeline load balancing while increasing throughput,
Moreno~\cite{moreno2012} proposed an optimal mapping strategy based on freeing
up resources, by gathering fastest stages together, and replicating the slowest
stages with them. Not only load balancing is not our purpose, but this would
demand to have replicable tasks only.

\textbf{Throughput on heterogeneous architectures:}
Benoit and Robert offered three heuristics for building interval mappings on
unrelated heterogeneous architectures~\cite{benoit2008}. Among them, BSL and BSC
use a combination of binary search and greedy allocation, which is similar to
the general scheme of \otac and our proposed heuristics. These heuristics,
however, do not consider replicated parallelism.

\textbf{Makespan on heterogeneous architectures:}
Topcuoglu et al.~\cite{topcuoglu99} introduced HEFT (one of the most used
heuristics for this kind of problem) and the CPOP to schedule directed acyclic
graphs (DAGs) over unrelated heterogeneous resources. Eyraud-Dubois and
Krumar~\cite{dubois2020} proposed HeteroPrioDep to schedule DAGs over two types
of unrelated resources. Sadly, none of these strategies applies for throughput
optimization, nor for pipeline and replicated parallelism. Agullo et
al.~\cite{agullo2015} studied the performance of dynamic schedulers on two types
of unrelated resources through simulation and real-world experiments. We also
employ both kinds of experiments in our evaluation, but dynamic schedulers from
current runtime systems are usually inefficient at our task granularity of
interest (tens to thousands of $\mu$s)~\cite{slaughter2020}. Benavides et
al.~\cite{benavides2014} proposed a heuristic for the flow shop scheduling
problem on unrelated resources, but their solution is not easily transposable
for pipeline and replicated parallelism.

\textbf{SDR on heterogeneous architectures:} Mack et al.~\cite{mack2022gnu}
proposed the use of the CEDR heterogeneous runtime system to encapsulate and
enable GNU~Radio’s signal processing blocks (tasks) in FPGA and GPU-based
systems on chip. They use dynamic scheduling heuristics and imitation learning
to co-schedule GNU~Radio’s blocks with other applications. In contrast, our
approaches build static pipeline decompositions and schedules for a lower
runtime overhead. We believe our algorithms can be integrated to GNU~Radio in
its future version (4.0)~\cite{morman2022thefuture} when it will abandon its
thread-per-block schedule, enabling better performance by avoiding its current
issues related to locality and OS scheduling
policies~\cite{bloessl2019benchmarking}.

\section{Problem Definition}\label{sec:def}

The problem of maximizing the throughput of a task chain over two kinds of
resources can be modeled as a pipelined workflow scheduling
problem~\cite{benoit2013asurvey}. The workflow can be described as a linear
chain of $n$ tasks $\tasks{} = \{\tau_1,\ldots,\tau_n\}$, meaning $\tau_i$ can
only execute after $\tau_{i-1}$. Tasks are partitioned into two subsets
$\tasks{}_{rep}$ and $\tasks{}_{seq}$ for \textit{replicable} (stateless) and
\textit{sequential} (stateful) tasks. Sequential tasks cannot be replicated due
to their internal state (i.e., replication leads to false results).

The computing system is composed of two types of unrelated heterogeneous
resources $v \in \{\tbig{}, \tlittle{}\}$ representing \textit{big} and
\textit{little} cores, respectively. Big cores are assumed to have the highest
power consumption. The system counts with $b$ big and $l$ little fully-connected
cores. Hereafter, the following notation is used to characterize system
resources: $R = (b,l)$. A task $\tau_i$ has a computation weight (i.e., its
latency) $w^v_{i}$ that depends on the core type $v$. These weights are obtained
by profiling the tasks' execution while running on the different cores.

The mapping strategy on our system is known as interval
mapping~\cite{benoit2013asurvey}, where $\tasks{}$ is partitioned into $k$
contiguous intervals.
We call the $i^{th}$ interval in the format $[\tau_c,
\tau_e]$ ($c\leq e$) a \textit{stage} noted as $s_i$. A stage is defined as
replicable if it contains only replicable tasks. We define $r_i$ and $v_i$ as
the number and the type of resources dedicated to $s_i$, respectively.
A stage is statically mapped to these resources.
The weight of a stage $s$ with $r$ cores of type $v$ is defined in
Equation~\eqref{eq:wsp}.
Other characteristics, such as the communication weights between tasks and
network bandwidth are considered out of the scope of our current work due to our
focus on heterogeneous \textit{multicore architectures} (keeping data exchanges
local) and interval mapping (which minimizes data transfers).
\begin{equation}\label{eq:wsp}
  w(s, r, v) =
  \begin{cases}
  \sum_{\tau\in s} w^v_{\tau} & \text{if}~s\cap \tasks{}_{seq} \not=
  \emptyset,~r \geq 1, \\
  \frac{1}{r}\sum_{\tau\in s} w^v_{\tau} & \text{if}~s\cap \tasks{}_{seq} =
  \emptyset,~r \geq 1, \\
  \infty & \text{otherwise}
  \end{cases}
\end{equation}

Our \textbf{main objective} is to find, before the execution of the application,
a solution $\solution{} = (\mathsf{s}, \mathsf{r},\mathsf{v})$ with
$\mathsf{s}=(s_1, \ldots, s_k)$ that maximizes throughput.

As throughput is inversely proportional to the period, we will refer
to this problem as a \textbf{period minimization} problem in the remaining
sections. The period of a solution $\mathsf{P}(\solution{})$ is given by the
greatest weight among all stages (Equation~\eqref{eq:tnp}). A solution is only
valid if the number of available resources is respected
(Equation~\eqref{eq:res}).
\begin{equation}\label{eq:tnp}
  \mathsf{P}(\mathsf{s},\mathsf{r},\mathsf{v}) =
  \max_{i \in [1,k]} w(s_i,r_i,v_i)
\end{equation}
\begin{equation}\label{eq:res}
    \sum_{v_i = \tbig{}} r_i \leq b, ~~~\sum_{v_i = \tlittle{}} r_i \leq l
\end{equation}

Our \textbf{secondary objective} is to minimize the power consumption of the
solution (as minimizing energy makes no sense when dealing with a continuous
data stream). Solutions to this problem depend on the information available.
For instance, if the power consumed by each task in each core type were
available (and independent from other tasks in the same core), using formulas
such as the ones given McGough~\cite{mcgough2022}, the objective of period
minimization could be prioritized over the power one. A second option would be
to assume a fixed power consumption per core used of each type. We chose to work
with a different proxy: the use of little cores instead of big ones, as they
have lower power consumption. In this case, our secondary objective is to
\textbf{use as many little cores as necessary} (and not more) to achieve the
minimum period. We will see how this impacts our proposed algorithms next.

\section{New Greedy Heuristics: \fertac and \twocatac}\label{sec:heuristics}

We propose two heuristics to schedule partially-replicable task chains on two
types of resources. They are both based on \otac~\cite{orhan2025b} which is able
to find optimal solutions for homogeneous resources. In a nutshell, \otac uses a
binary search to set up a target period (similar to Algorithm~\ref{algo:solve})
and then tries to greedily build a schedule by packing as many tasks as possible
in each stage (as in Algorithm~\ref{algo:stage}). Our new heuristics, named
\fertac and \twocatac, use different means to minimize the period while
using as many little cores as necessary. We discuss the main ideas behind them
next.

\subsection{\fertac}\label{subsec:dwotac}

\textit{First Efficient Resources for TAsk Chains}, or \textbf{\fertac} for
short, aims to use little cores to build each stage. Big cores are only used
when it is not possible to respect the target period. We will explain how
\fertac operates by first covering its methods common to \twocatac (i.e.,
Schedule and ComputeStage) and then discussing its specific implementation of
the ComputeSolution method (Algorithm~\ref{algo:dwotac}).

In  order to help with the legibility of Schedule and ComputeStage, some common
support methods are listed in Algorithm~\ref{algo:support}.
\textbf{IsValid} (line~1) checks if a partial or complete solution respects the
constraints of the problem (it has one or more stages, it respects the target
period, and it does not surpass the numbers of big or little cores).
\textbf{MaxPacking} (line~3) returns the end index for the stage (starting from
task $s$) that includes the most tasks without surpassing the target period
given a number and type of core.
\textbf{RequiredCores} (line~5) returns the number of cores of a given type that
a stage requires to respect the target period.
\textbf{IsRep} (line~6) informs if a stage is replicable (i.e., only contains
replicable tasks), and \textbf{FinalRepTask} (line~7) returns the index of the
last replicable task in a contiguous sequence.

\textbf{Schedule} (Algorithm~\ref{algo:solve}) follows a binary search procedure
similar to the one used for the CCP problem~\cite{pinar2004fast}. It sets the
lower period bound by the maximum between (I) replicating all tasks over all
resources and (II) the sequential task with the largest weight
(line~1)\footnote{For the sake of simplicity, we assume here that tasks run
fastest on big cores. This only affects the computation of period bounds and can
easily be changed with some more comparisons between weights.}. The upper period
bound is based on the minimum period plus the largest task weight (line~2). The
binary search (lines~5--14) tries to find a solution with a target middle
period. If the solution is valid, we store it and update the upper bound
(lines~9--10), or else we update the lower bound (line~12).
The search stops when the difference between the bounds is smaller than an
epsilon (line~3). Its value is defined as $\frac{1}{b+l}$ to account for
replicated stages that can lead to periods with values that are not integers
($\frac{1}{r}$ in Equation~\eqref{eq:wsp}), but whose fractions cannot go
smaller than the number of resources. In total, this requires
$\mathcal{O}(\log (w_{max}(b+l)))$ calls to ComputeSolution, with $w_{max} =
\max_{\tau \in \tasks} w^{\tlittle{}}_{\tau}$.

\begin{algorithm}[!ht]
\caption{Common support methods}
\label{algo:support}
\scriptsize
\begin{algorithmic}[1]
\State \textbf{IsValid}$((\mathsf{s},\mathsf{r},\mathsf{v}),b,l,P):$
\State ~~~~\textbf{return} $(|\mathsf{s}| > 0$ ~\textbf{and}~$\mathsf{P}(\mathsf{s},\mathsf{r},\mathsf{v}) \leq P$ ~\textbf{and}~$\sum_{i \in [1,|\mathsf{v}|] \wedge v_i = \tbig{}} r_i\leq b$ ~\textbf{and}~$\sum_{i \in [1,|\mathsf{v}|] \wedge v_i = \tlittle{}} r_i \leq l)$

\State \textbf{MaxPacking}$(\tasks{},   s, c, v, P) :$
\State ~~~~\textbf{return} $\max (s, \max_{i \in [s, |\tasks{}|]} \{i~|~w([\tau_{s},\tau_{i}],c,v) \leq P\})$

\State \textbf{RequiredCores}$(\tasks{}, s, e, v, P) : \lceil \frac{w([\tau_{s},\tau_{e}],1,v)}{P} \rceil$

\State \textbf{IsRep}$(\tasks{}, s, e) : [\tau_{s},\tau_{e}] \cap \tasks{}_{seq} = \emptyset$

\State \textbf{FinalRepTask}$(\tasks{},   s, e) : \max_{i \in [e, |\tasks{}|]} \{i~|~$ IsRep$(\tasks{}, s, i)\}$
\end{algorithmic}
\end{algorithm}

\begin{algorithm}[!ht]
\caption{Schedule (common method)}
\label{algo:solve}
\scriptsize
\begin{algorithmic}[1]
\Require Set of tasks $\tasks{}$, big cores $b$, little cores $l$.
\Ensure Pipelined and replicated solution $\solution{}_{best}$.
\State $P_{min} \leftarrow \max ( \frac{\sum_{\tau\in \tasks{}} w^{\tbig{}}_{\tau}}{b+l} , \max_{\tau \in \tasks{}_{seq}} w^{\tbig{}}_{\tau} )$ \Comment{Minimum expected period}
\State $P_{max} \leftarrow P_{min} + \max_{\tau \in \tasks} w^{\tlittle{}}_{\tau}$ \Comment{Maximum expected period}
\State $\epsilon \leftarrow \frac{1}{b+l}$
\State $\solution{}_{best} \leftarrow \emptyset$
\While{$P_{max} - P_{min} \geq \epsilon$}
    \State $P_{mid} = \frac{P_{max} + P_{min}}{2}$ \Comment{Target period for the binary search iteration}
    \Statex $\triangleright$~\textcolor{blue}{\textit{ComputeSolution is different for \fertac (Algorithm~\ref{algo:dwotac}) and \twocatac (Algorithm~\ref{algo:notac})}}
    \State $\solution{} \leftarrow$~ComputeSolution$(\tasks{},1,b,l,P_{mid})$
    \If{IsValid$(\solution{}, b, l, P_{mid})$} \Comment{Checks for validity (Algorithm~\ref{algo:support})}
        \State $\solution{}_{best} \leftarrow \solution{}$ \Comment{New best solution}
        \State $P_{max} \leftarrow \mathsf{P}(\solution{})$ \Comment{Can only decrease the target period from here}
    \Else
        \State $P_{min} \leftarrow P_{mid}$ \Comment{Can only increase the target period}
    \EndIf
\EndWhile
\State \textbf{return} $\solution{}_{best}$
\end{algorithmic}
\end{algorithm}

\begin{algorithm}[!ht]
\caption{ComputeStage (common method)}
\label{algo:stage}
\scriptsize
\begin{algorithmic}[1]
\Require Set of tasks $\tasks{}$, task index $s$, cores $c$, core type $v$,
  target period $P$.
\Ensure Task index $e$, used cores $u$.
\State $e \leftarrow $~MaxPacking$(\tasks{},   s, 1, v, P)$ \Comment{Packs tasks using one core (Algorithm~\ref{algo:support})}
\State $u \leftarrow $~RequiredCores$(\tasks{}, s, e, v, P)$ \Comment{Cores needed for this stage (Algorithm~\ref{algo:support})}
\If{$e \neq n$~\textbf{and} IsRep$(\tasks{}, s, e)$} \Comment{If the stage is replicable (Algorithm~\ref{algo:support})}
    \State $e \leftarrow $~FinalRepTask$(\tasks{},   s, e)$ \Comment{Extends the stage (Algorithm~\ref{algo:support})}
    \State $u \leftarrow $~RequiredCores$(\tasks{}, s, e, v, P)$
    \If{$u > c$} \Comment{Not enough cores for all tasks, needs to reduce the stage}
        \State $e \leftarrow $~MaxPacking$(\tasks{},   s, c, v, P)$~;~$u \leftarrow c$
    \ElsIf{$e \neq n$} \Comment{Checks if it is better to leave one core for the next stage}
        \State $f \leftarrow $~MaxPacking$(\tasks{},   s, u-1, v, P)$
        \If{RequiredCores$(\tasks{}, f+1, e+1, v, P) = 1$}
            \State $e \leftarrow f$~;~$u \leftarrow u-1$ \Comment{Best to reduce the stage}
        \EndIf
    \EndIf
\EndIf
\State \textbf{return} $e, u$
\end{algorithmic}
\end{algorithm}

\textbf{ComputeStage} (Algorithm~\ref{algo:stage}) tries to find where to finish
a stage and how many cores (of a given type) are required to respect the target
period. It first tries to pack as many tasks as possible in the stage using a
single core (line~1). We check how many cores the stage requires for the case
where the last task in the chain is replicable and its weight surpasses the
target period (line~2). If the stage is replicable (line~3), it is extended to
include all following replicable tasks (line~4). If this long stage requires
more cores than available, it is reduced to respect the target period
(lines~5--7). If this stage is not the final one, it means there is a sequential
task after it. We check if it is better to move this stage’s final tasks to the
next stage while saving one core and, if that is the case, we update the end of
the stage (lines~9--12). All these tests guarantee that the stage is packing as
many tasks as possible with the given cores.

\fertac's \textbf{ComputeSolution} recursively computes a solution for a given
target period (Algorithm~\ref{algo:dwotac}) by first trying to build a stage
with little cores (line~1), and only moving to big cores if no valid solution
was found (lines~2--3). If the stage is final, then it finishes the recursion
(lines~8--9). If not, then we are required to continue computing the next stage
with the remaining cores (lines~11--13). ComputeSolution returns the list of
stages\footnote{The operation $\cdot$ is used for the concatenation of new items
at the start of the stages, resources, and core types lists.} if a valid
solution is found (line~15).

\begin{algorithm}
\caption{ComputeSolution for \fertac}
\label{algo:dwotac}
\scriptsize
\begin{algorithmic}[1]
\Require Set of tasks $\tasks{}$, task index $s$, big cores $b$, little cores
  $l$, target period $P$.
\Ensure Pipelined and replicated [partial] solution.
\State $e, u \leftarrow $~ComputeStage$(\tasks{}, s, l, \tlittle{}, P)$~;~$v \leftarrow \tlittle{}$ \Comment{Uses little cores (Algorithm~\ref{algo:stage})}
\If{\textbf{not} IsValid$(([\tau_s,\tau_e],u,v), b, l, P)$} \Comment{Checks for validity (Algorithm~\ref{algo:support})}
    \State $e, u \leftarrow $~ComputeStage$(\tasks{}, s, b, \tbig{}, P)$~;~$v \leftarrow \tbig{}$ \Comment{Needed to use big cores}
    \If{\textbf{not} IsValid$(([\tau_s,\tau_e],u,v), b, l, P)$} \Comment{No valid solution for both cases}
        \State \textbf{return} $(\emptyset, \emptyset, \emptyset)$
    \EndIf
\EndIf
\If{$e = |\tasks{}|$}
    \State \textbf{return} $([\tau_s,\tau_e], u, v)$ \Comment{Returns the valid, final stage}
\Else \Comment{Needs to continue building stages}
    \State $b \leftarrow b-u$~\textbf{if}~$v = \tbig{}$ \Comment{Updates available cores for next stages}
    \State $l \leftarrow l-u$~\textbf{if}~$v = \tlittle{}$
    \State $(\mathsf{s},\mathsf{r},\mathsf{v}) \leftarrow$~ComputeSolution$(\tasks{},e+1,b,l,P)$ \Comment{Computes the next stages}
    \If{IsValid$((\mathsf{s},\mathsf{r},\mathsf{v}), b, l, P)$}
        \State \textbf{return} $([\tau_s,\tau_e] \cdot \mathsf{s}, u \cdot \mathsf{r}, v \cdot
 \mathsf{v})$ \Comment{Returns the list of stages}
    \Else
        \State \textbf{return} $(\emptyset, \emptyset, \emptyset)$
    \EndIf
\EndIf
\end{algorithmic}
\end{algorithm}

\begin{figure}[!ht]
  \centering
  \includegraphics[width=0.9\columnwidth]{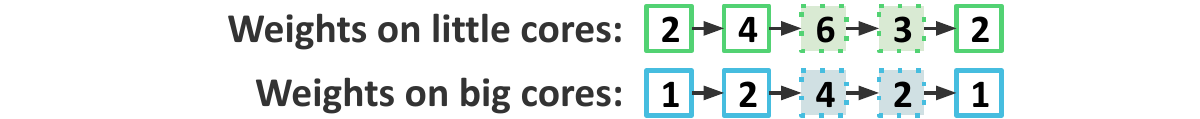}
  \caption{Example of a task chain with five tasks. Full, transparent (resp.
    dashed, shaded) boxes represent sequential (resp. replicable) tasks. Green
    (resp. blue) boxes indicate the weights for tasks in little (resp. big)
    cores.}
  \label{fig:example_chain}
\end{figure}

\begin{figure}[!ht]
  \centering
  \includegraphics[width=0.9\columnwidth]{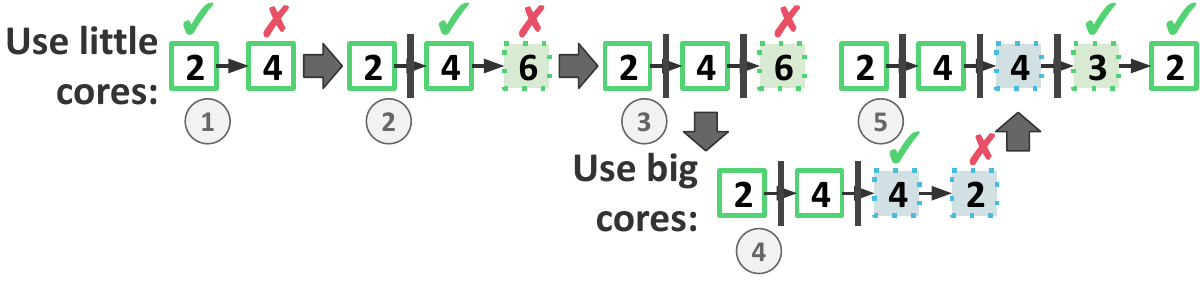}
  \caption{Solution found with \fertac’s ComputeSolution for the task chain from
    Figure~\ref{fig:example_chain} for 3 big and 3 little cores, and a target
    period of $P=5$.}
  \label{fig:fertac}
\end{figure}

Figure~\ref{fig:fertac} illustrates the five steps (gray circles) followed by
Algorithm~\ref{algo:dwotac} for the task chain illustrated in
Figure~\ref{fig:example_chain} in a scenario with 3 big cores, 3 little cores,
and a target period of $P=5$. In step~1, \fertac finds a valid stage with the
first task by itself using one little core. It then recursively calls
ComputeSolution (line 13) with task index 2 and one less little core in order to
compute the next stages. \fertac finds another valid stage with a single task
using one little core in step~2. With the little core that remains, it fails to
find a valid stage including the third task in step~3. This happens because the
task surpasses the target period and there are not enough little cores left to
replicate the stage. \fertac then tries to build the stage using big cores
(line~3) in step~4. It decides the best option is to create a stage with the
third task only and a big core. This decision is made after checking that
leaving the third and fourth task together using two big cores (lines~4 and~5 in
Algorithm~\ref{algo:stage}) would be worse than leaving the fourth task to be
scheduled with the last task on another stage (lines~9 to~12 in
Algorithm~\ref{algo:stage}). Finally, in step~5, \fertac builds the last stage
using the last two remaining tasks and a little core. It then returns this valid
stage (line~9 in Algorithm~\ref{algo:dwotac}) to start completing the recursion.
The preceding method calls check that the returned stages are valid in line~14
and return the list of stages (including their own) in line~15.

Regarding the complexity of \fertac, multiple implementation aspects have to be
considered. Given $n = |\tasks{}|$, we chose to precompute the sum of weights
for any given stage using two prefix sums in $\mathcal{O}(n)$. We also chose to
precompute if any stage is replicable (Algorithm~\ref{algo:support}, line~6) in
$\mathcal{O}(n^2)$ for simplicity, but this cost could be amortized by
sequentially checking each task (as in \otac~\cite{orhan2025b}). The validity of
a solution (Algorithm~\ref{algo:support}, lines~1--2) has its cost amortized by
checking each stage as it is built. Packing tasks in a stage and identifying the
final replicable task in a sequence can be done task by task. Finally, it can be
seen that each task is considered a constant number of times in ComputeStage
(Algorithm~\ref{algo:stage}), and a task can only be considered for two stages
in sequence (and twice for both types of cores). With all these aspects taken
into consideration, our implementation of \fertac requires $\mathcal{O}(n\log
(w_{max}(b+l)) + n^2)$ operations.

\subsection{\twocatac}\label{subsec:notac}

While \fertac tries to use little cores as soon as possible, \textit{Two-Choice
Allocation for TAsk Chains} (or \textbf{\twocatac} for short) tries both types
of cores for building a stage at each time. This enables the strategy to make
better use of little cores in later stages, and to potentially consider
different secondary objectives when comparing solutions. This comes at the cost
of an exponential increase in the number of solutions to check.

\begin{algorithm}[t!]
\caption{ComputeSolution for \twocatac}
\label{algo:notac}
\scriptsize
\begin{algorithmic}[1]
\Require Set of tasks $\tasks{}$, task index $s$, big cores $b$, little cores
  $l$, target period $P$.
\Ensure Pipelined and replicated [partial] solution $\solution{}_{best}$.

\For{$v \in \{\tlittle{},\tbig{}\}$} \Comment{Builds solution for this stage with both types of cores}
    \State $r \leftarrow b$~\textbf{if}~$v = \tbig{}$~\textbf{else}~$l$
    \State $e_v, u_v \leftarrow $~ComputeStage$(\tasks{}, s, r, v, P)$ \Comment{Greedily builds a stage (Algorithm~\ref{algo:stage})}
    \If{\textbf{not} IsValid$(([\tau_s,\tau_{e_v}],u_v,v), b, l, P)$} \Comment{Checks for validity (Algorithm~\ref{algo:support})}
        \State $\solution{}_v \leftarrow (\emptyset, \emptyset, \emptyset)$ \Comment{No valid stage with this type of cores}
    \ElsIf{$e_v = |\tasks{}|$}
        \State $\solution{}_v \leftarrow ([\tau_s,\tau_{e_v}], u_v, v)$ \Comment{Valid, final stage option}
    \Else
        \State $b_v \leftarrow b-u_v$~\textbf{if}~$v = \tbig{}$~\textbf{else}~$b$ \Comment{Updates available cores for next stages}
        \State $l_v \leftarrow l-u_v$~\textbf{if}~$v = \tlittle{}$~\textbf{else}~$l$
        \State $(\mathsf{s}_v,\mathsf{r}_v,\mathsf{v}_v) \leftarrow$~ComputeSolution$(\tasks{},e_v+1,b_v,l_v,P)$ \Comment{Next stages}
        \If{IsValid$((\mathsf{s}_v,\mathsf{r}_v,\mathsf{v}_v), b_v, l_v, P)$}
            \State $\solution{}_v \leftarrow([\tau_s,\tau_{e_v}] \cdot \mathsf{s}_v, u_v \cdot \mathsf{r}_v, v \cdot \mathsf{v}_v)$ \Comment{Valid combined solution}
        \Else
            \State $\solution{}_v \leftarrow (\emptyset, \emptyset, \emptyset)$
        \EndIf
    \EndIf
\EndFor
\State \textbf{return} ChooseBestSolution$(\solution_\tbig{}, \solution_\tlittle{}, b, l, P)$ \Comment{Picks the best solution (Algorithm~\ref{algo:choosebest})}
\end{algorithmic}
\end{algorithm}

\twocatac's \textbf{ComputeSolution} (Algorithm~\ref{algo:notac}) computes the
stage for both big and little cores (lines 1--3). In each case, if the final
stage is identified, it is stored for comparison (line~7), or else the recursion
is launched for the next stage (line~11) and combined with the current stage
(line~13).

\begin{algorithm}[t!]
\caption{ChooseBestSolution (part of \twocatac)}
\label{algo:choosebest}
\scriptsize
\begin{algorithmic}[1]
\Require Solutions $\solution_\tbig{}$ and $\solution_\tlittle{}$, big cores
  $b$, little cores $l$, target period $P$.
\Ensure Pipelined and replicated [partial] solution $\solution{}_{best}$.
\If{IsValid$(\solution_\tbig{}, b, l, P)$} \Comment{Checks for validity (Algorithm~\ref{algo:support})}
    \If{IsValid$(\solution_\tlittle{}, b, l, P)$}
        \For{$v \in \{\tbig{},\tlittle{}\}$} \Comment{Compares the core usage of the solutions}
            \State $(\mathsf{s},\mathsf{r},\mathsf{v}) \leftarrow \solution{}_{v}$
            \State $\Sigma b_v \leftarrow \sum_{i \in [1,|\mathsf{v}|] \wedge v_i = \tbig{}} r_i$
            \State $\Sigma l_v \leftarrow \sum_{i \in [1,|\mathsf{v}|] \wedge v_i = \tlittle{}} r_i$
        \EndFor
        \If{$\Sigma l_\tbig{} > \Sigma l_\tlittle{}$~\textbf{and}~$\Sigma b_\tbig{} < \Sigma b_\tlittle{}$}
            \State $\solution{}_{best} \leftarrow \solution_\tbig{}$ \Comment{$\solution_\tbig{}$ makes better usage of little cores}
        \ElsIf{$\Sigma l_\tbig{} < \Sigma l_\tlittle{}$~\textbf{and}~$\Sigma b_\tbig{} > \Sigma b_\tlittle{}$}
            \State $\solution{}_{best} \leftarrow \solution_\tlittle{}$ \Comment{$\solution_\tlittle{}$ makes better usage of little cores}
        \ElsIf{$\Sigma l_\tbig{} + \Sigma b_\tbig{} < \Sigma l_\tlittle{} + \Sigma b_\tlittle{}$}
            \State $\solution{}_{best} \leftarrow \solution_\tbig{}$ \Comment{$\solution_\tbig{}$ uses fewer cores}
        \Else
            \State $\solution{}_{best} \leftarrow \solution_\tlittle{}$ \Comment{$\solution_\tlittle{}$ uses fewer cores}
        \EndIf
    \Else
        \State $\solution{}_{best} \leftarrow \solution_\tbig{}$ \Comment{Only valid solution}
    \EndIf
\ElsIf{IsValid$(\solution_\tlittle{}, b, l, P)$}
    \State $\solution{}_{best} \leftarrow \solution_\tlittle{}$ \Comment{Only valid solution}
\Else   
    \State $\solution{}_{best} \leftarrow (\emptyset, \emptyset, \emptyset)$ \Comment{No valid solution}
\EndIf
\State \textbf{return} $\solution{}_{best}$
\end{algorithmic}
\end{algorithm}

ComputeSolution employs \textbf{ChooseBestSolution}
(Algorithm~\ref{algo:choosebest}) to compare the solutions for both types of
cores. A solution is directly returned if it is the only valid one (lines~18
and~21). In the other case, the solution that better exchanges big cores for
little ones is returned (lines~9 and~11) or, in the last scenario, the one that
uses fewer cores is chosen (lines~13 and~15). As ComputeSolution's objective is
to find a schedule that respects the target period, there is no need to compare
the stages' weights for the different solutions.

\begin{figure}[!ht]
  \centering
  \includegraphics[width=\columnwidth]{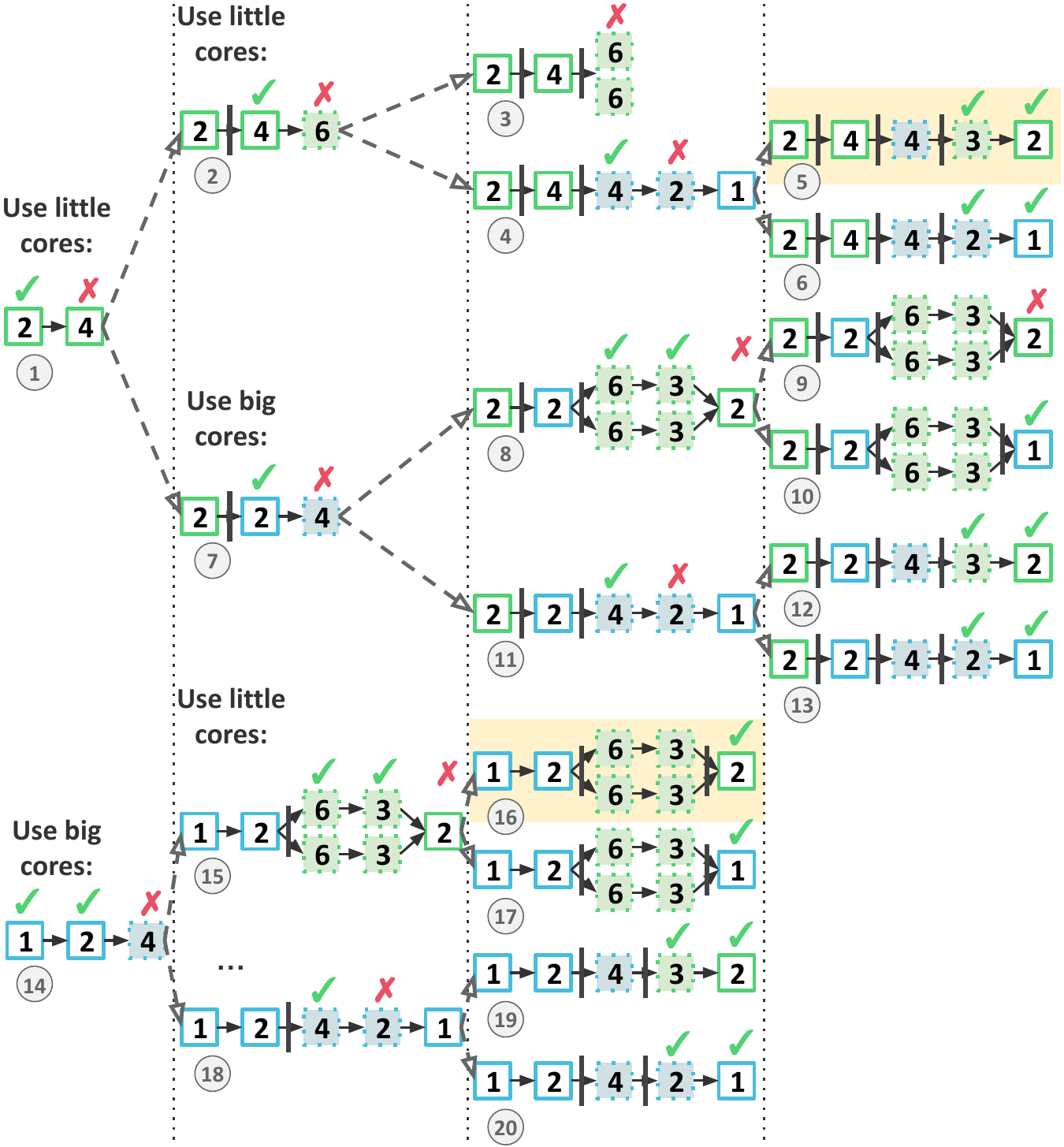}
  \caption{Solutions computed with \twocatac’s ComputeSolution for the task
    chain from Figure~\ref{fig:example_chain} for 3 big and 3 little cores, and
    a target period of $P=5$. The best solutions from each branch (steps 5 and
    16) are highlighted in yellow.}
  \label{fig:2catac}
\end{figure}

Figure~\ref{fig:2catac} illustrates the steps (gray circles) followed by
Algorithm~\ref{algo:notac} for the task chain illustrated in
Figure~\ref{fig:example_chain} in a scenario with 3 big cores, 3 little cores,
and a target period of $P=5$. The recursive process is equivalent to a
depth-first traversal of the tree representing the different stage options.
In the first iteration of the loop (lines~1--18), \twocatac builds a valid stage
with the first task and one little core in step 1. It then makes the recursive
call to ComputeSolution (line 11) to generate the possibilities for the next
stage. When it gets to step 3, it fails to find a valid stage with little cores
(line 4 verifies that it surpasses the number of little cores available), so
\twocatac continues in step 4 by building a stage using big cores. When steps 5
and 6 are finished, \twocatac compares the two versions for the last stage and
returns only the one deemed to be the best (line 19). In this case, it will
return the stage built in step 5 because it better exchanges big cores for
little ones (line 11 in Algorithm~\ref{algo:choosebest}). The final solution
represented in step 5 will be returned through the recursion until it gets back
to the level from step 2, where step 7 will then be launched to start evaluating
solutions using big cores for the second stage of the pipeline. In the end, when
all recursive calls are done, \twocatac will compare the solutions finished in
step~5 and~16, as they each represent the best ones found in each branch of the
starting recursion tree. As both use the same numbers of big and little cores,
the schedule finished in step~5 will be the one returned (line~15 in
Algorithm~\ref{algo:choosebest}).

\twocatac's complexity is defined by its recursion’s possible solutions tree.
Other aspects, such as the comparison between two solutions, are amortized by
capturing relevant information while computing them. For instance, we provide
the accumulated core usages when combining solutions
(Algorithm~\ref{algo:notac}, line~13) instead of computing them each time
(Algorithm~\ref{algo:choosebest}, lines~5--6). Given the considerations
previously discussed for \fertac, we can conclude that \twocatac displays a
worst-case complexity in $\mathcal{O}(2^n\log (w_{max}(b+l)))$ when each stage
contains only one task. This can be prohibitive for larger task chains, but it
is still faster than the optimal solution (to be shown next) in some scenarios
(see Section~\ref{sec:sim}).

\section{Optimal Dynamic Programming Solution}\label{sec:dp}

This section first presents our dynamic programming solution, and later proves
the optimality of the solution. Our dynamic programming solution can be defined
based on this problem's recurrence. Let $\mathsf{P}^*(j,b,l)$ be the best period
achieved when mapping tasks from $\tau_1$ to $\tau_j$ using up to $b$~big cores
and $l$~little cores. $\mathsf{P}^*(j,b,l)$ can be computed using the recurrence
in Equation~\eqref{eq:dp} with $\mathsf{P}^*(0,b,l) = 0$ and
$\mathsf{P}^*(j,0,0) =\infty$ for $j>0$.
\begin{equation}\label{eq:dp}
\begin{split}
    \mathsf{P}^*(j,b,l) = & \\
    \min\limits_{i \in [1,j]} &
    \begin{cases}
    \min\limits_{u \in [1,b]} \biggl( \max \Bigl( \mathsf{P}^*(i-1,b-u,l),
      w([\tau_{i},\tau_{j}],u,\tbig{}) \Bigr) \biggr)  \\
    \min\limits_{u \in [1,l]}~ \biggl( \max \Bigl( \mathsf{P}^*(i-1,b,l-u),
      w([\tau_{i},\tau_{j}],u,\tlittle{}) \Bigr) \biggr)
    \end{cases}
\end{split}
\end{equation}

Equation~\eqref{eq:dp} shows that an optimal solution can be built from partial
optimal solutions. The best solution is found by trying all possible starts for
the stage finishing in $\tau_j$ and all possible resource distributions between
this stage and previous ones for both core types. This recurrence can be
computed in $\mathcal{O}(j^2bl(b+l))$ time and $\mathcal{O}(jbl)$ space.

\begin{algorithm}[!ht]
\caption{\herad}
\label{algo:opt_main}
\scriptsize
\begin{algorithmic}[1]
\Require Set of tasks $\tasks{}$, big cores $b$, little cores $l$.
\Ensure Pipelined and replicated solution $\solution{}_{best}$.
\For{$i \in [1,|\tasks{}|]~,~j \in [0,b]~,~k \in [0,l]$} \Comment{Initializes solution matrix}
    \State $\msol{}_{Pbest}[i][j][k] \leftarrow \infty$ \Comment{Minimal maximum period}
    \State $\msol{}_{prev}[i][j][k]  \leftarrow (0,0)$ \Comment{Big and little cores in the previous stages}
    \State $\msol{}_{acc}[i][j][k]  \leftarrow (0,0)$ \Comment{Accumulated big and little cores}
    \State $\msol{}_{v}[i][j][k]  \leftarrow \tlittle{}$ \Comment{Type of core used in the stage}
    \State $\msol{}_{start}[i][j][k] \leftarrow 1$ \Comment{Index of the starting task of the stage}
\EndFor
\State SingleStageSolution$(1,\msol{},\tasks{},b,l)$ \Comment{Single task in a single stage (Algorithm~\ref{algo:opt_single})}
\For{$e \in [2,|\tasks{}|]$}
    \State SingleStageSolution$(e,\msol{},\tasks{},b,l)$ \Comment{All $e$ tasks in a single stage (Algorithm~\ref{algo:opt_single})}
    \For{$u_b \in [0,b]$} \Comment{Solutions with more than one stage}
        \For{$u_l \in [0,l]$} \Comment{and varying numbers of cores}
            \If{$u_b \neq 0$~\textbf{or}~$u_l \neq 0$}
                \State RecomputeCell$(e,\msol{},\tasks{},u_b,u_l)$ \Comment{$\mathsf{P}^*(e,u_b,u_l)$ (Algorithm~\ref{algo:opt_cell})}
            \EndIf
        \EndFor
    \EndFor
\EndFor
\State \textbf{return} ExtractSolution$(\msol{},\tasks{},b,l)$ \Comment{Converts the matrix to a solution (Algorithm~\ref{algo:opt_extract})}
\end{algorithmic}
\end{algorithm}

\begin{algorithm}[!ht]
\caption{SingleStageSolution (part of \herad)}
\label{algo:opt_single}
\scriptsize
\begin{algorithmic}[1]
\Require Task index $t$, solution matrix $\msol{}$, set of tasks $\tasks{}$,
  big cores $b$, little cores $l$.
\Ensure Updated $\msol{}[t][:][:]$.
\For{$r_l \in [1,l]$} \Comment{Initializes row with a stage using little cores}
    \State $\msol{}_{Pbest}[t][0][r_l] \leftarrow w([\tau_{1},\tau_{t}],r_l,\tlittle{})$
    \State $\msol{}_{acc}[t][0][r_l] \leftarrow (0,r_l)$~\textbf{if}~IsRep$(\tasks{}, 1, t)$~\textbf{else}~$(0,1)$
\EndFor
\For{$r_b \in [1,b]$}
    \State $w_b \leftarrow w([\tau_{1},\tau_{t}],r_b,\tbig{})$ \Comment{Computes the stage with big cores}
    \State $u_b \leftarrow r_b$~\textbf{if}~IsRep$(\tasks{}, 1, t)$~\textbf{else}~$1$
    \For{$r_l \in [0,l]$} \Comment{Compares if it is better to use $r_b$ big }
        \If{$w_b < \msol{}_{Pbest}[t][0][r_l]$} \Comment{~or $r_l$ little cores for this single stage}
            \State $\msol{}_{Pbest}[t][r_b][r_l] \leftarrow w_b$
            \State $\msol{}_{acc}[t][r_b][r_l] \leftarrow (u_b,0)$
            \State $\msol{}_{v}[t][r_b][r_l] \leftarrow \tbig{}$
        \Else
            \State $\msol{}_{Pbest}[t][r_b][r_l] \leftarrow \msol{}_{Pbest}[t][0][r_l]$
            \State $\msol{}_{acc}[t][r_b][r_l] \leftarrow \msol{}_{acc}[t][0][r_l]$
        \EndIf
    \EndFor
\EndFor
\end{algorithmic}
\end{algorithm}

\begin{algorithm}[!ht]
\caption{RecomputeCell (part of \herad)}
\label{algo:opt_cell}
\scriptsize
\begin{algorithmic}[1]
\Require Task index $j$, solution matrix $\msol{}$, set of tasks $\tasks{}$,
  big cores available $b$, little cores available $l$.
\Ensure Updated $\msol{}[j][b][l]$.
\State $C \leftarrow \msol{}[j][b][l]$ \Comment{Uses the initial solution from SingleStageSolution (Algorithm~\ref{algo:opt_single})}
\State $C \leftarrow$~CompareCells$(C, \msol{}[j][b][l-1])$~\textbf{if}~$l>0$ \Comment{Compares to neighbor solutions}
\State $C \leftarrow$~CompareCells$(C, \msol{}[j][b-1][l])$~\textbf{if}~$b>0$ \Comment{~using one less core}
\For{$i \in [2,j]$~in reverse order} \Comment{External $\min_{i \in [1,j]}$ (Equation~\ref{eq:dp})}
    \For{$u \in [1,b]$} \Comment{Internal $\min_{u \in [1,b]}$ (Equation~\ref{eq:dp})}
        \State $B_{Pbest} \leftarrow \max \left( \msol{}_{Pbest}[i-1][b-u][l], w([\tau_{i},\tau_{j}],u,\tbig{}) \right)$
        \State $(a_b, a_l) \leftarrow \msol{}_{acc}[i-1][b-u][l]$
        \State $B_{acc} \leftarrow (a_b+u, a_l)$~\textbf{if}~IsRep$(\tasks{}, i, j)$~\textbf{else}~$(a_b+1, a_l)$
        \State $B_{prev} \leftarrow (b-u, a_l)$~;~$B_{v} \leftarrow \tbig{}$~;~$B_{start} \leftarrow i$
        \State $C \leftarrow$~CompareCells$(C,B)$ \Comment{Keeps the best solution (Algorithm~\ref{algo:opt_compare})}
    \EndFor
    \For{$u \in [1,l]$} \Comment{Internal $\min_{u \in [1,l]}$ (Equation~\ref{eq:dp})}
        \State $L_{Pbest} \leftarrow \max \left( \msol{}_{Pbest}[i-1][b][l-u], w([\tau_{i},\tau_{j}],u,\tlittle{}) \right)$
        \State $(a_b, a_l) \leftarrow \msol{}_{acc}[i-1][b][l-u]$
        \State $L_{acc} \leftarrow (a_b, a_l+u)$~\textbf{if}~IsRep$(\tasks{}, i, j)$~\textbf{else}~$(a_b, a_l+1)$
        \State $L_{prev} \leftarrow (a_b, l-u)$~;~$L_{v} \leftarrow \tlittle{}$~;~$L_{start} \leftarrow i$
        \State $C \leftarrow$~CompareCells$(C,L)$ \Comment{Keeps the best solution (Algorithm~\ref{algo:opt_compare})}
    \EndFor
\EndFor
\State $\msol{}[j][b][l] \leftarrow C$ \Comment{Stores the best solution}
\end{algorithmic}
\end{algorithm}

\begin{algorithm}[!ht]
\caption{CompareCells (part of \herad)}
\label{algo:opt_compare}
\scriptsize
\begin{algorithmic}[1]
\Require Matrix cells for partial solutions $C$ (current) and $N$ (new).
\Ensure Best partial solution.
\State $(c_b, c_l) \leftarrow C_{acc}$~;~$(n_b, n_l) \leftarrow N_{acc}$
\If{$(C_{Pbest} > N_{Pbest})$ ~\textbf{or}~$(C_{Pbest} = N_{Pbest}$ \textbf{and} $c_l < n_l$ \textbf{and} $c_b > n_b)$ ~\textbf{or}~$(C_{Pbest} = N_{Pbest}$ \textbf{and} $c_l \geq n_l$ \textbf{and} $c_b \geq n_b)$}
    \State \textbf{return} $N$
\Else
    \State \textbf{return} $C$
\EndIf
\end{algorithmic}
\end{algorithm}

\begin{algorithm}[!ht]
\caption{ExtractSolution (part of \herad)}
\label{algo:opt_extract}
\scriptsize
\begin{algorithmic}[1]
\Require Solution matrix $\msol{}$, set of tasks $\tasks{}$, big cores $b$, little cores $l$.
\Ensure Pipelined and replicated solution $\solution{}_{best}$.
\State $e \leftarrow |\tasks{}|$~;~$s \leftarrow |\tasks{}|$~;~$r_b \leftarrow b$~;~$r_l \leftarrow l$
\State $(\mathsf{s},\mathsf{r},\mathsf{v}) \leftarrow (\emptyset, \emptyset, \emptyset)$
\While{$e\geq 1$}
    \State $s \leftarrow \msol{}_{start}[e][r_b][r_l]$ \Comment{Start of the stage}
    \State $(u_b, u_l) \leftarrow \msol{}_{acc}[e][r_b][r_l]$
    \State $v \leftarrow \msol{}_{v}[e][r_b][r_l]$ \Comment{Type of core used}
    \State $(p_b, p_l) \leftarrow \msol{}_{prev}[e][r_b][r_l]$
    \If{$s>1$} \Comment{Gets the number of cores used in this stage only}
        \State $(c_b, c_l) \leftarrow \msol{}_{acc}[s-1][p_b][p_l]$
        \State $u_b \leftarrow u_b - c_b$~;~$u_l \leftarrow u_l - c_l$
    \EndIf
    \State $r \leftarrow u_b$ \textbf{if} $v = \tbig{}$ \textbf{else} $u_l$ \Comment{Number of cores used}
    \State $(\mathsf{s},\mathsf{r},\mathsf{v}) \leftarrow ([\tau_s,\tau_e] \cdot \mathsf{s}, r \cdot \mathsf{r},v \cdot \mathsf{v})$ \Comment{Adds the stage to the solution}
    \State $e \leftarrow s-1$~;~$r_b \leftarrow p_b$~;~$r_l \leftarrow p_l$ \Comment{Index for the predecessor stage}
\EndWhile
\State \textbf{return} $(\mathsf{s},\mathsf{r},\mathsf{v})$
\end{algorithmic}
\end{algorithm}

\textbf{\herad}, or short for \textit{Heterogeneous Resource Allocation using
Dynamic programming} (Algorithm~\ref{algo:opt_main}), implements the optimal
strategy of Equation~\eqref{eq:dp} in a bottom-up fashion while also considering
the secondary objective of using as many little cores as necessary. It starts by
initializing a solution matrix $\msol{}$ that will contain all optimal partial
solutions (lines~1--7). It then computes all optimal solutions for the first
task in the chain with all possible numbers of cores (line~8) using
SingleStageSolution (i.e., $\mathsf{P}^*(1,:,:)$). In the next step and for
increasing numbers of tasks (index $e$ in line~9), the algorithm computes a
first solution where all tasks belong to the same stage (line~10). Using the
notation from Equation~\eqref{eq:dp}, these represent the solutions using values
based only on $w([\tau_{1},\tau_{e}],u,\tbig{})$ or  $w([\tau_{1},\tau_{e}],u,
\tlittle{})$. Then, it computes the optimal partial solution for this number of
tasks with increasing numbers of big and/or little cores iteratively (line~14)
using RecomputeCell. This represents computing $\mathsf{P}^*(e,b,l)$ reusing the
values computed before for $\mathsf{P}^*(i,b,l)$ for $i<e$ that are stored in
$\msol{}[i]$. The algorithm finishes by going backwards in the solution matrix
and identifying the stages that belong to the optimal solution (line~19) using
ExtractSolution (Algorithm~\ref{algo:opt_extract}). We have also added an extra
step that merges consecutive stages if they are replicable and using the same
core type. This has no impact in the minimum period achieved, but it leads to
solutions with fewer stages.

\textbf{SingleStageSolution} (Algorithm~\ref{algo:opt_single}) finds the best
solutions when putting all considered tasks in the same stage. It computes and
stores the weight of the stage using increasing numbers of little cores, taking
care to register that sequential stages can only benefit from a single core
(lines~1--4). It then considers an increasing number of big cores (lines~6--7)
and compares their solutions with the ones using little cores (lines~8--17). It
stores the solution with minimum period in the matrix, solving ties in favor of
the little cores (lines~9--16).

\textbf{RecomputeCell} (Algorithm~\ref{algo:opt_cell}) tries all possible
optimal solutions for a scenario with a given number of tasks, big cores and
little cores. It uses the solution from SingleStageSolution as a starting point
and compares it to other solutions with one fewer big or little core
(lines~1--3) that have been previously computed. It then computes all possible
solutions for $\max \mathsf{P}^*(i-1,b-u,l), w([\tau_{i},\tau_{j}],u,\tbig{})$
(lines~4--11) and $\max \mathsf{P}^*(i-1,b,l-u), w([\tau_{i},\tau_{j}],u,
\tlittle{})$ (lines~4 and 12--18), comparing them sequentially to the best
solution found so far, and the best solution is stored in the matrix (line~20).
We implement an optimization that limits comparisons to a single core (instead
of a range of cores in lines 5 and 12) if the stage is sequential. All solution
comparisons make use of \textbf{CompareCells}
(Algorithm~\ref{algo:opt_compare}). It returns the solution with the minimum
maximum period. In the case of ties, the solution that better exchanges big
cores for little cores is returned or, in the last scenario, the one that uses
fewer cores is chosen

\begin{figure}[!ht]
  \centering
  \includegraphics[width=0.9\columnwidth]{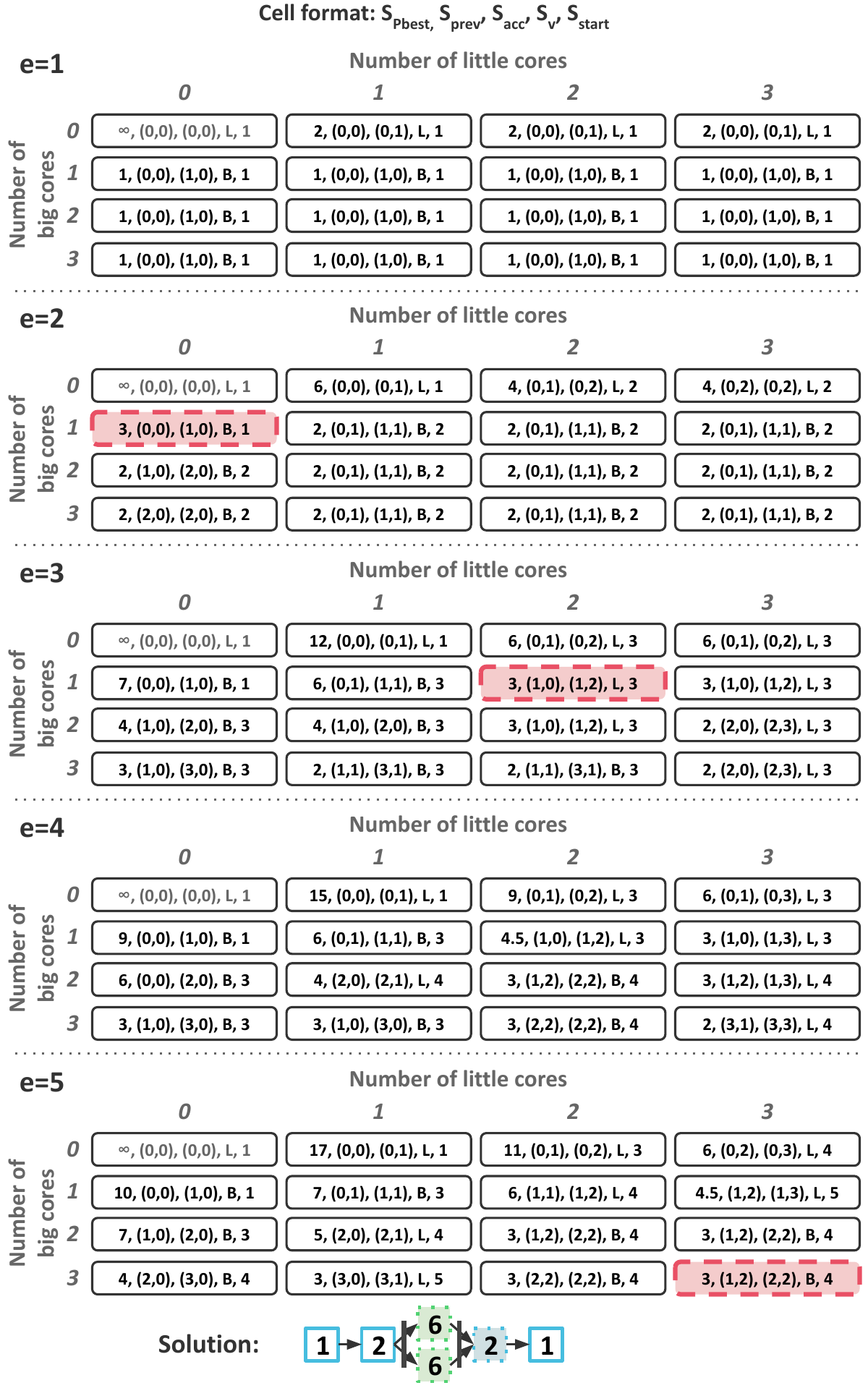}
  \caption{Representation of the values stored in the different 3D matrices by
    \herad to compute an optimal solution for the task chain from
    Figure~\ref{fig:example_chain} for 3 big and 3 little cores. Each 2D table
    represents the values computed for a given task index $i$. Red, dashed,
    shaded cells represent the path taken by Algorithm~\ref{algo:opt_extract} to
    extract the final solution.}
  \label{fig:herad}
\end{figure}

Figure~\ref{fig:herad} illustrates the final state of the solution matrix
$\msol{}$ after the execution of \herad for task chain illustrated in
Figure~\ref{fig:example_chain} in a scenario with 3 big cores, 3 little cores.
Each 2D table represents the values in $\msol{}[e]$ for different numbers of big
and little cores. The value of $\msol{}_{Pbest}$ in the cell represents the
optimal period found for that number of tasks, big cores, and little cores (in
other words, the value of $\mathsf{P}^*$ from Equation~\eqref{eq:dp}). They are
computed in order for increasing values of task index $e$ from top to bottom,
left to right (lines 11--17 in Algorithm~\ref{algo:opt_main}).
Algorithm~\ref{algo:opt_extract} extracts the final solution (illustrated on the
bottom of the figure) by going from the last cell ($\msol{}[5][3][3]$ in this
case) backwards, which ends up covering all red, dashed, shaded cells in
Figure~\ref{fig:herad}.

\subsection{\herad's optimality proof}

The optimality of \herad can be demonstrated by induction\footnote{For the sake
of brevity, we provide only a resumed proof of this solution’s optimality.
Suffice to say, similar proofs have been provided for other interval-based
mapping algorithms~\cite{agrawal2008} and dynamic programming algorithms with
secondary objectives handled when comparing partial
solutions~\cite{nunes2024}.}. Its proof combines elements of
Equation~\eqref{eq:dp} and its implementations in
Algorithms~\ref{algo:opt_main},~\ref{algo:opt_single}, and~\ref{algo:opt_cell}.
At each step, we first cover the period minimization aspect of the solution,
followed by the idea of using as many little cores as necessary.

\begin{lemma}\label{proof:1:1}
  The solution for $\mathsf{P}^*(1,b,l)$ is optimal.
\end{lemma}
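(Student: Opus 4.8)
The plan is to treat this as the base case of the induction outlined above and to exploit the fact that a chain of a single task leaves essentially no freedom in its decomposition. First I would observe that with only $\tau_1$ there is exactly one interval mapping, the lone stage $[\tau_1,\tau_1]$, so every valid solution is fully described by the chosen core type $v_1\in\{\tbig{},\tlittle{}\}$ and core count $r_1$, subject to $r_1\le b$ when $v_1=\tbig{}$ and $r_1\le l$ when $v_1=\tlittle{}$ by Equation~\eqref{eq:res}. Since the pipeline has a single stage, its period from Equation~\eqref{eq:tnp} is simply $w([\tau_1,\tau_1],r_1,v_1)$, and proving optimality of the period reduces to minimizing this single stage weight over the feasible pairs $(v_1,r_1)$.

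Next I would characterize that minimum directly from Equation~\eqref{eq:wsp}. If $\tau_1$ is replicable, then $w([\tau_1,\tau_1],r,v)=w^{v}_{\tau_1}/r$ is strictly decreasing in $r$, so the best allocation of a fixed type uses all available cores of that type, and the optimum is $\min(w^{\tbig{}}_{\tau_1}/b,\,w^{\tlittle{}}_{\tau_1}/l)$; if $\tau_1$ is sequential, then $w([\tau_1,\tau_1],r,v)=w^{v}_{\tau_1}$ is independent of $r\ge 1$ and the optimum is $\min(w^{\tbig{}}_{\tau_1},w^{\tlittle{}}_{\tau_1})$, realized with a single core. I would then match this closed form against SingleStageSolution (Algorithm~\ref{algo:opt_single}) run on $t=1$: lines~1--4 store the little-core weight $w([\tau_1,\tau_1],r_l,\tlittle{})$ in each cell $\msol{}_{Pbest}[1][0][r_l]$ (leaving $\msol{}_{Pbest}[1][0][0]=\infty$, consistent with $\mathsf{P}^*(1,0,0)=\infty$), lines~6--7 compute the big-core weight for each $r_b$, and the comparison in lines~9--16 keeps, for every pair $(r_b,r_l)$, the pointwise minimum of the two candidates. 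This yields $\msol{}_{Pbest}[1][r_b][r_l]=\min(w([\tau_1,\tau_1],r_b,\tbig{}),\,w([\tau_1,\tau_1],r_l,\tlittle{}))$, which by the case analysis equals $\mathsf{P}^*(1,r_b,r_l)$, and in particular gives the optimal period for the full budget $\mathsf{P}^*(1,b,l)$.

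The last step, and the only subtle point, concerns the secondary objective rather than the period, so I expect that to be the main obstacle: I must confirm that the recorded solution is not merely period-optimal but also respects the ``use as many little cores as necessary, and not more'' criterion, and that the bookkeeping in $\msol{}_{acc}$ coincides with the cores genuinely consumed by the chosen stage, so the invariant can be carried forward by the inductive step. For a single task this is direct. The accumulated-core updates register $r$ cores of the winning type when the task is replicable and a single core when it is sequential (lines~3, 7, 11, 15), so $\msol{}_{acc}$ reports exactly the cores needed to reach the minimal period; and the strict inequality $w_b<\msol{}_{Pbest}[1][0][r_l]$ at line~9 means that whenever a big-core and a little-core allocation attain the same minimal period the algorithm retains the little-core one, which is precisely the preference encoded in CompareCells (Algorithm~\ref{algo:opt_compare})—more little cores and fewer big cores, and otherwise fewer cores overall. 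Establishing this base case cleanly is what allows the inductive step to assume both optimality of the stored periods and correctness of the tie-breaking bookkeeping.
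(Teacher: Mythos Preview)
Your proof is correct and follows essentially the same approach as the paper's: both argue that a single task admits only one-stage solutions, that SingleStageSolution stores the minimum over big and little core allocations, and that the strict inequality at line~9 resolves ties in favor of little cores. Your version is more explicit (case-splitting on replicable versus sequential and checking the $\msol{}_{acc}$ bookkeeping), but the underlying argument is the same.
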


\begin{proof}
  The only possible solutions for $\mathsf{P}^*(1,b,l)$ include a single
  pipeline stage using big or little cores. Algorithm~\ref{algo:opt_single} is
  used to compute the solution for $j=1$ (Algorithm~\ref{algo:opt_main},
  line~8). It stores the minimum between the solutions using $b$ big cores or
  $l$ little cores (Algorithm~\ref{algo:opt_single}, lines 5--9), therefore it
  is optimal in period.

  Regarding the use of little cores, the algorithm first computes solutions
  using them (lines 1--4) and then solves ties with big cores in favor of the
  little ones (line~9, use of $<$), thus being optimal in this aspect too.
\end{proof}

\begin{lemma}\label{proof:1:2}
  The solution for $\mathsf{P}^*(j,b,l)$ is optimal if the solutions for
  $\mathsf{P}^*(i,r_b,r_l)$ are also optimal for $i<j$, $r_b\leq b$, and
  $r_l\leq l$.
\end{lemma}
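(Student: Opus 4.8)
The plan is to prove this inductive step by the standard decomposition argument for interval-mapping dynamic programs, treating the two objectives separately as in the proof of Lemma~\ref{proof:1:1}. First I would establish period optimality. Any valid mapping of $\tau_1,\dots,\tau_j$ onto at most $b$ big and $l$ little cores has a last stage $[\tau_i,\tau_j]$ for some $i\in[1,j]$, allocated either $u$ big or $u$ little cores; since stages occupy disjoint sets of cores, the remaining tasks $\tau_1,\dots,\tau_{i-1}$ are mapped onto at most $(b-u,l)$ or $(b,l-u)$ cores, and by Equation~\eqref{eq:tnp} the period of the whole mapping is the $\max$ of the last stage weight $w([\tau_i,\tau_j],u,v)$ and the period of the prefix. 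By the induction hypothesis the optimal prefix period is $\mathsf{P}^*(i-1,b-u,l)$ (resp. $\mathsf{P}^*(i-1,b,l-u)$), since $i-1<j$; conversely, pairing any such optimal prefix with the last stage yields a valid mapping. Hence minimizing over all starts $i$, all counts $u$, and both core types reproduces exactly the recurrence of Equation~\eqref{eq:dp}, which is what RecomputeCell (Algorithm~\ref{algo:opt_cell}, lines~4--18) enumerates, seeded by the $i=1$ single-stage values precomputed by SingleStageSolution (Algorithm~\ref{algo:opt_single}). Therefore the stored period equals $\mathsf{P}^*(j,b,l)$ and is optimal.

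I would then argue that the neighbour comparisons in RecomputeCell (lines~2--3) never disturb this period. By monotonicity, a mapping using at most $(b-1,l)$ or $(b,l-1)$ cores is also a mapping using at most $(b,l)$ cores, so $\mathsf{P}^*(j,b-1,l)\geq\mathsf{P}^*(j,b,l)$ and $\mathsf{P}^*(j,b,l-1)\geq\mathsf{P}^*(j,b,l)$; comparing against these neighbours can only tie the recurrence value, never lower it. This same observation is what lets the secondary objective safely reuse them.

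Next I would handle the secondary objective. Among all mappings attaining the optimal period, \herad must return one that best exchanges big cores for little ones and then uses the fewest cores, the tie-break encoded by CompareCells (Algorithm~\ref{algo:opt_compare}). A period-optimal mapping that spends strictly fewer than the full budget is precisely a mapping that is period-optimal for a smaller budget, i.e. the solution already stored in a neighbour cell $\msol[j][b-1][l]$ or $\msol[j][b][l-1]$. Because RecomputeCell is invoked with increasing $u_b$ and $u_l$ (Algorithm~\ref{algo:opt_main}, lines~11--12), these neighbours are available and, by a nested induction on $b+l$ at fixed $j$, already satisfy the secondary objective, while the hypothesis of the present lemma supplies the cross-$j$ cells $\mathsf{P}^*(i-1,\cdot,\cdot)$. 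Combining the recurrence candidates (whose accumulated counts $\msol{}_{acc}$ are maintained correctly through the IsRep test in lines~7--8 and~14--15, matching Equation~\eqref{eq:wsp}) with the two neighbours and resolving every tie via CompareCells therefore selects the period-optimal mapping that is also optimal in the secondary objective.

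The main obstacle I anticipate is precisely this secondary-objective step: showing that the recurrence candidates together with the two neighbour cells cover every Pareto-relevant, period-optimal core configuration, and that the two-level tie-break of CompareCells (first ``more little, fewer big,'' then ``fewer total'') is consistent enough to propagate a globally optimal choice through the sequence of pairwise comparisons. Making the nested induction on core counts explicit, and verifying that $\msol{}_{acc}$ distinguishes replicable from sequential stages exactly as in Equation~\eqref{eq:wsp}, are the details requiring the most care; the period argument, by contrast, is the routine interval-DP decomposition.
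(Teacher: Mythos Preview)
Your proposal is correct and follows essentially the same approach as the paper: period optimality via the last-stage decomposition matching Equation~\eqref{eq:dp}, and the secondary objective via the tie-breaking of CompareCells together with the neighbour-cell propagation of Algorithm~\ref{algo:opt_cell}, lines~2--3. Your treatment is considerably more explicit than the paper's brief sketch (which the authors themselves call a ``resumed proof''), in particular your monotonicity argument that the neighbour comparisons cannot lower the period and your nested induction on $b+l$ at fixed $j$; these spell out details the paper leaves implicit but do not change the underlying argument.
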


\begin{proof}
  The period of $\mathsf{P}^*(j,b,l)$ takes its value from the minimum period
  among all possible starts for the stage finishing in $\tau_j$ using all
  possible resource distributions (Equation~\eqref{eq:dp}, loops in
  Algorithm~\ref{algo:opt_single} using Algorithm~\ref{algo:opt_compare}, and
  Algorithm~\ref{algo:opt_cell}). To consider another schedule with a smaller
  period is a contradiction, as it requires having a suboptimal
  $\mathsf{P}^*(i,r_b,r_l)$, or a value that is smaller than the minimum of all
  possible solutions, so $\mathsf{P}^*(j,b,l)$ is optimal regarding its period.

  Regarding the use of little cores, Algorithm~\ref{algo:opt_compare} always
  solves ties in the benefit of the solution that better exchanges big cores for
  little cores or the one that uses fewer cores. We also ensure that solutions
  having one less big or little core available are propagated from previous
  solutions (Algorithm~\ref{algo:opt_cell} lines 2--3), thus the solution is
  also optimal in this aspect.
\end{proof}

\begin{theorem}\label{proof:final}
  \herad yields optimal solutions regarding the period achieved and the use of
  little cores.
\end{theorem}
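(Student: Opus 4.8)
The plan is to prove the theorem by induction on the number of tasks $j$, directly invoking Lemmas~\ref{proof:1:1} and~\ref{proof:1:2} as the base case and inductive step, and then arguing that the quantity computed by the full algorithm (\herad) actually coincides with $\mathsf{P}^*(|\tasks{}|,b,l)$ together with a valid schedule extracted by ExtractSolution. The two lemmas already do the heavy lifting: Lemma~\ref{proof:1:1} establishes optimality (in both period and little-core usage) of every cell $\mathsf{P}^*(1,b,l)$, and Lemma~\ref{proof:1:2} establishes the inductive implication that optimality of all strictly smaller cells $\mathsf{P}^*(i,r_b,r_l)$ (with $i<j$, $r_b\leq b$, $r_l\leq l$) forces optimality of $\mathsf{P}^*(j,b,l)$. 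So the structural part of the argument is a short, clean induction.

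First I would set up the induction explicitly over $j$, the index of the last task. The base case $j=1$ is exactly Lemma~\ref{proof:1:1}, which holds for every pair $(b,l)$. For the inductive step, I would assume as the induction hypothesis that $\mathsf{P}^*(i,r_b,r_l)$ is optimal for all $i<j$ and all $r_b\leq b$, $r_l\leq l$; Lemma~\ref{proof:1:2} then yields optimality of $\mathsf{P}^*(j,b,l)$ for every $(b,l)$. One point I would make carefully is that Lemma~\ref{proof:1:2} requires the full set of smaller cells to be optimal, so the induction must be a strong induction that carries along \emph{all} cells with a smaller task index simultaneously; this matches the evaluation order in Algorithm~\ref{algo:opt_main}, where cells for task index $e$ are only computed after all cells for indices $e-1$ are finalized. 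By induction, $\mathsf{P}^*(|\tasks{}|,b,l)$ is therefore optimal in both objectives.

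Next I would connect the recurrence-level statement to the algorithm's output. \herad{} stores each optimal partial solution in the matrix $\msol{}$, and ExtractSolution (Algorithm~\ref{algo:opt_extract}) walks backwards from $\msol{}[|\tasks{}|][b][l]$ through the recorded $\msol{}_{start}$ and $\msol{}_{prev}$ pointers to recover the stage decomposition, resource counts, and core types. I would argue that the reconstructed schedule attains precisely the stored optimal period (since each cell records the stage that realized its minimum, and the backpointers reference exactly the subproblem whose optimal value entered that minimum), and that the final merging of consecutive replicable same-type stages leaves the period unchanged, as already noted in the text. Hence the extracted solution is valid (respects the resource bounds of Equation~\eqref{eq:res} by construction of the accumulators $\msol{}_{acc}$) and achieves the optimal period with the little-core-usage tie-breaking preserved along the reconstruction path.

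\textbf{The main obstacle} I anticipate is not the period-optimality induction, which is immediate from the lemmas, but rigorously justifying that the secondary objective (using as many little cores as necessary, and no more) composes correctly across stages. The subtlety is that CompareCells (Algorithm~\ref{algo:opt_compare}) applies a tie-breaking rule that is only a \emph{partial} preference among equal-period solutions: it favors a solution that simultaneously uses more little and fewer big cores, and otherwise falls back to total core count. I would need to argue that this local, greedy comparison at each cell yields a globally optimal little-core usage — i.e., that an optimal substructure property genuinely holds for the secondary objective, so that combining a little-core-optimal prefix $\mathsf{P}^*(i-1,\cdot,\cdot)$ with a locally best final stage cannot be beaten by some globally different resource split. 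This requires showing the secondary objective is additively separable over stages along the chosen backpointer path (which it is, since $\msol{}_{acc}$ accumulates per-type core counts additively) and that the partial order induced by CompareCells is consistent enough that ties never cascade into a suboptimal global choice; the cited analogous proofs for dynamic programming with secondary objectives handled at comparison time~\cite{nunes2024} provide the template, and I would lean on the footnote's allowance for a resumed argument to keep this step at the level of invoking optimal substructure for the separable secondary cost rather than exhaustively enumerating tie configurations.
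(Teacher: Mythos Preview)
Your proposal is correct and follows essentially the same approach as the paper: the paper's proof simply states that Lemmas~\ref{proof:1:1} and~\ref{proof:1:2} establish the base case and inductive step of an induction, concluding \herad{} is optimal. Your version is more thorough in making the strong induction explicit, connecting to ExtractSolution, and flagging the composability of the secondary objective, but these elaborations go beyond (rather than diverge from) the paper's terse argument.
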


\begin{proof}
  Lemmas \ref{proof:1:1} and \ref{proof:1:2} prove the optimality of the base
  case and the inductive step, so \herad is optimal.
\end{proof}

As given by Theorem~\ref{proof:final}, \herad provides schedules with minimal
periods while using as many little cores as necessary with the potential issue
of a high complexity. We next evaluate how its benefits and drawbacks measure
against our greedy heuristics.

\section{Simulations \& Analytic Results}\label{sec:sim}

In this section, we study synthetic task chains and processors to check how well
the strategies are able to optimize our two objectives, and also to profile
their execution time. Comparisons include our three strategies and
\otac~\cite{orhan2025b} (which handles homogeneous resources only). We provide
more details about our experimental environment and results in the next
subsections. Source code, result files, and scripts are freely available
online~\cite{amp-scheduling_v2.0}.

\subsection{Experimental Environment}

Experiments were executed on a Dell \textbf{Latitude~7420} notebook (Intel Core
i7-1185G7 @ 3~GHz, 32~GB LPDDR4 @ 3733~MT/s, 512~GB NVMe SSD) running Linux
(Ubuntu 24.04.1 LTS kernel 6.8.0-51, g++ 13.3.0). For period and core usage
measurements, 1000~task chains of 20~tasks were generated. Task weights were
randomly set in the integer interval [1,100] uniformly for big cores with a
slowdown in the interval [1,5] for little cores (rounded using the ceiling
function). In order to evaluate how the replicable tasks affect the strategies,
the stateless ratio (SR) (i.e., fraction of tasks that are replicable) of each
chain was set equal to $\{0.2, 0.5, 0.8\}$ for different scenarios. We set the
number of big ($b$) and little ($l$) cores (e.g., the resources $R = (b,l)$) in
the simulation using three different pairs $\{(16_\mathcal{B},4_\mathcal{L}),
(10_\mathcal{B},10_\mathcal{L}), (4_\mathcal{B},16_\mathcal{L})\}$.
For the execution time profiling, we generate 50~task chains for varying numbers
of tasks ($20i | i \in [1,8]$), pairs of numbers of cores ($(20i,20i) | i \in
[1,8]$), and SRs.

\subsection{Slowdown Compared to \herad}

Given that \herad always provides minimal periods, we use the slowdown ratio
$\frac{\mathsf{P}(\solution{}_{\text{other}})}
{\mathsf{P}(\solution{}_{\text{\herad}})}$ to compare strategies.
Figure~\ref{fig:densities_zoomed} illustrates the cumulative distributions
(1000~task chains) of slowdown ratios for varying resources and slowdown ratios.
Each line represents a strategy, with \otac[B] (resp. \otac[L]) using only big
(resp. little) cores. We can notice in the first column ($SR=0.2$) that
\twocatac and \fertac tend to find minimal periods in most cases, but they
become less effective as the SR increases (other columns). The higher the SR,
the more likely for the period to be limited by replicable tasks and the higher
the number of replication options to explore, making it harder to find the best
solution.

\begin{figure}[!ht]
  \centering
  \begin{subfigure}{1.0\columnwidth}
    \includegraphics[width=0.32\columnwidth]%
    {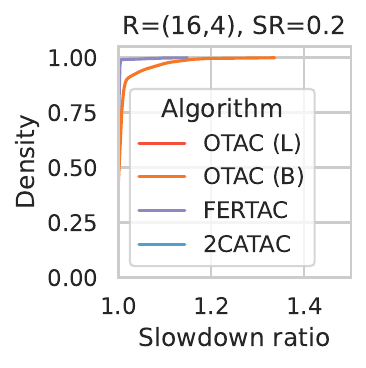}
    \hfill
    \includegraphics[width=0.32\columnwidth]%
    {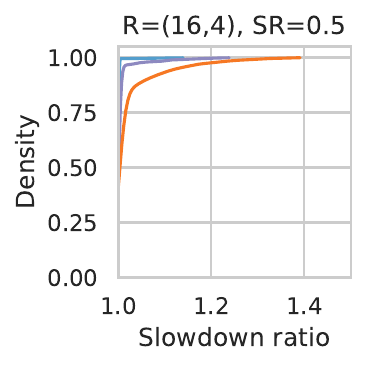}
    \hfill
    \includegraphics[width=0.32\columnwidth]%
    {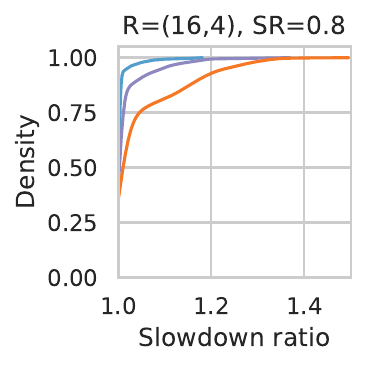}
    \\
    \includegraphics[width=0.32\columnwidth]%
    {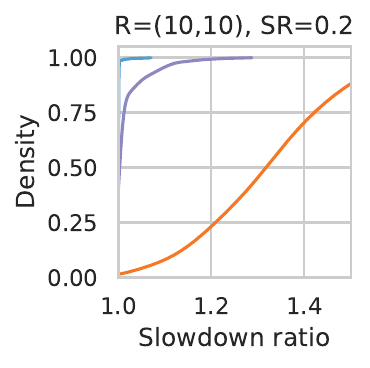}
    \hfill
    \includegraphics[width=0.32\columnwidth]%
    {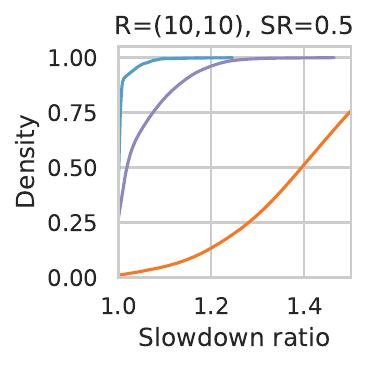}
    \hfill
    \includegraphics[width=0.32\columnwidth]%
    {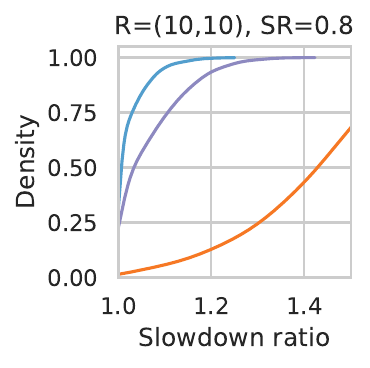}
    \\
    \includegraphics[width=0.32\columnwidth]%
    {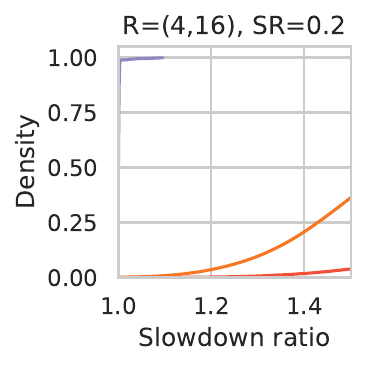}
    \hfill
    \includegraphics[width=0.32\columnwidth]%
    {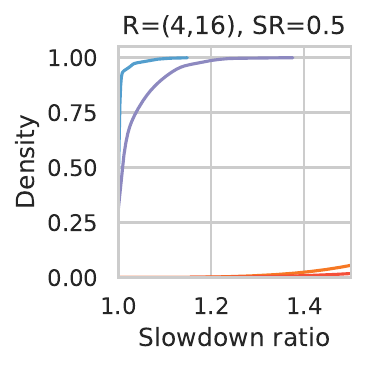}
    \hfill
    \includegraphics[width=0.32\columnwidth]%
    {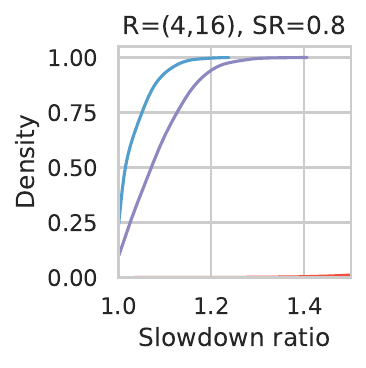}
    \caption{Results zoomed in the slowdown interval $[1,1.5]$.}
    \label{fig:densities_zoomed}
  \end{subfigure}

  \vspace{0.2cm}
  \begin{subfigure}{1.0\columnwidth}
    \includegraphics[width=0.32\columnwidth]%
    {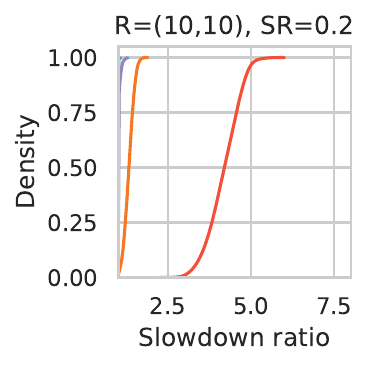}
    \hfill
    \includegraphics[width=0.32\columnwidth]%
    {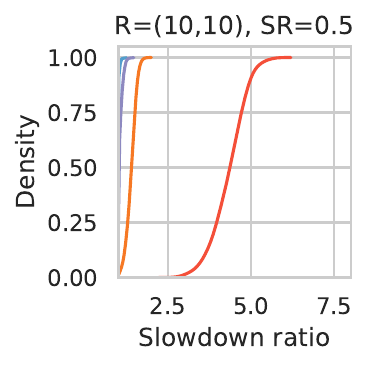}
    \hfill
    \includegraphics[width=0.32\columnwidth]%
    {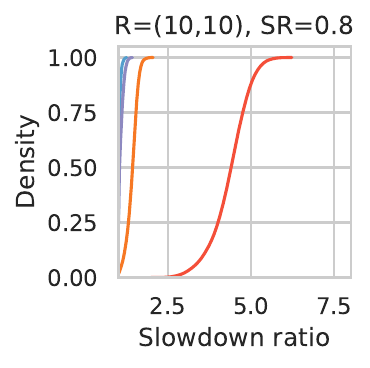}
    \caption{Full slowdown interval for $R=(10_{\mathcal{B}},
      10_{\mathcal{L}})$.}
    \label{fig:densities}
  \end{subfigure}
  \caption{Cumulative solution (e.g., \emph{density}) distributions of slowdown
    ratios (cf. \herad) for different heuristics. Columns represent different
    SRs. In Subfigure~\ref{fig:densities_zoomed}, rows represent different pairs
    of resources.}
  \label{fig:densities_all}
\end{figure}

\otac[B] performs similarly to \fertac only when many big cores are available
(first row). Meanwhile, \otac[L] never finds optimal solutions because it lacks
the big cores to handle the slowest tasks. The gap between these strategies can
be better seen in Figure~\ref{fig:densities} with a full range of slowdown
ratios.

We summarize our simulation statistics in Table~\ref{tab:stats}. When few little
cores are available ($R=(16_{\mathcal{B}},4_{\mathcal{L}})$), \twocatac and
\fertac find the majority of minimal periods, leading to $1\%$ or lower
slowdowns on average. Even for scenarios with different numbers of cores,
\twocatac and \fertac achieve average slowdown ratios limited to $1.03$ and
$1.08$, respectively, which represent $97.1\%$ and $92.6\%$ of the potential
throughput. Their worst results ($R=(10_{\mathcal{B}},10_{\mathcal{L}}),
SR=0.5$) were limited to maximum slowdowns of $1.23$ and $1.41$, respectively
(or $81.3\%$ and $70.1\%$ of the potential throughput). In comparison, \otac[B]
shows average slowdown ratios comparable to the maximum slowdowns of \fertac for
$R=(10_{\mathcal{B}},10_{\mathcal{L}})$, and even worse when even fewer big
cores are available. It emphasizes the importance of using both core types
together.

\begin{table*}[h]
\centering
\caption{Simulation statistics for all scheduling strategies. Each 4-tuple
  counts the percentage of optimal periods, and the average, median, and maximum
  slowdown ratios. Each pair indicates the average number of cores used
  according to their type. Presented results are averaged accross multiple
  runs.}
\label{tab:stats}
\resizebox{0.9\linewidth}{!}{
\begin{tabular}{c l c@{\hskip 0.05in}
                    r@{\hskip 0.05in}
                    r@{\hskip 0.05in}
                    r@{\hskip 0.05in}
                    r@{\hskip 0.05in}
                    c
                    c@{\hskip 0.05in}
                    r@{\hskip 0.05in}
                    r@{\hskip 0.05in}
                    c
                    c@{\hskip 0.05in}
                    r@{\hskip 0.05in}
                    r@{\hskip 0.05in}
                    r@{\hskip 0.05in}
                    r@{\hskip 0.05in}
                    c
                    c@{\hskip 0.05in}
                    r@{\hskip 0.05in}
                    r@{\hskip 0.05in}
                    c
                    c@{\hskip 0.05in}
                    r@{\hskip 0.05in}
                    r@{\hskip 0.05in}
                    r@{\hskip 0.05in}
                    r@{\hskip 0.05in}
                    c
                    c@{\hskip 0.05in}
                    r@{\hskip 0.05in}
                    r@{\hskip 0.05in}
                    c}
\toprule
& & \multicolumn{10}{c}{$SR = 0.2$} & \multicolumn{10}{c}{$SR = 0.5$} & \multicolumn{10}{c}{$SR = 0.8$} \\
\cmidrule(lr){3-12} \cmidrule(lr){13-22} \cmidrule(lr){23-32}
& & \multicolumn{6}{c}{Period Statistics} & \multicolumn{4}{c}{Core Usage} & \multicolumn{6}{c}{Period Statistics} & \multicolumn{4}{c}{Core Usage} & \multicolumn{6}{c}{Period Statistics} & \multicolumn{4}{c}{Core Usage} \\
\cmidrule(lr){3-8} \cmidrule(lr){9-12} \cmidrule(lr){13-18} \cmidrule(lr){19-22} \cmidrule(lr){23-28} \cmidrule(lr){29-32}
$R = (b,l)$ & Strategy & ( & \% opt, & avg, & med, & max & ) & ( & $b_{\text{used}}$, & $l_{\text{used}}$ & ) & ( & \% opt, & avg, & med, & max & ) & ( & $b_{\text{used}}$, & $l_{\text{used}}$ & ) & ( & \% opt, & avg, & med, & max & ) & ( & $b_{\text{used}}$, & $l_{\text{used}}$ & ) \\
\midrule
\multirow{5}{*}{\rotatebox[origin=c]{0}{$(16_{\mathcal{B}},4_{\mathcal{L}})$}}
    & \otac[L]  & ( &   0.0\%, & 9.0, & 8.9, & 13.8 & ) & ( &  0.0, &  4.0 & ) & ( &  0.0\%, & 9.3, & 9.2, & 14.8 & ) & ( &  0.0, &  4.0 & ) & ( &  0.0\%, & 10.5, & 10.3, & 17.9 & ) & ( &  0.0, &  4.0 & ) \\
    & \otac[B]  & ( &  88.7\%, & 1.0, & 1.0, &  1.3 & ) & ( & 14.1, &  0.0 & ) & ( & 82.7\%, & 1.0, & 1.0, &  1.3 & ) & ( & 14.3, &  0.0 & ) & ( & 69.9\%, &  1.0, &  1.0, &  1.4 & ) & ( & 14.4, &  0.0 & ) \\
    & \fertac   & ( &  99.2\%, & 1.0, & 1.0, &  1.1 & ) & ( & 12.4, &  3.9 & ) & ( & 95.8\%, & 1.0, & 1.0, &  1.2 & ) & ( & 12.8, &  3.9 & ) & ( & 84.3\%, &  1.0, &  1.0, &  1.3 & ) & ( & 13.3, &  3.8 & ) \\
    & \twocatac & ( & 100.0\%, & 1.0, & 1.0, &  1.0 & ) & ( & 11.7, &  3.3 & ) & ( & 99.6\%, & 1.0, & 1.0, &  1.1 & ) & ( & 12.0, &  3.4 & ) & ( & 93.0\%, &  1.0, &  1.0, &  1.1 & ) & ( & 12.9, &  3.3 & ) \\
    & \herad    & ( & 100.0\%, & 1.0, & 1.0, &  1.0 & ) & ( & 11.7, &  3.3 & ) & ( &100.0\%, & 1.0, & 1.0, &  1.0 & ) & ( & 11.9, &  3.5 & ) & ( &100.0\%, &  1.0, &  1.0, &  1.0 & ) & ( & 12.6, &  3.4 & ) \\
\addlinespace 
\multirow{5}{*}{\rotatebox[origin=c]{0}{$(10_{\mathcal{B}},10_{\mathcal{L}})$}}
    & \otac[L]  & ( &   0.0\%, & 4.1, & 4.1, &  5.6 & ) & ( &  0.0, &  9.5 & ) & ( &  0.0\%, & 4.3, & 4.3, &  5.8 & ) & ( &  0.0, &  9.7 & ) & ( &  0.0\%, &  4.3, &  4.4, &  5.8 & ) & ( &  0.0, &  9.8 & ) \\
    & \otac[B]  & ( &   1.7\%, & 1.3, & 1.3, &  1.7 & ) & ( &  9.9, &  0.0 & ) & ( &  1.4\%, & 1.3, & 1.3, &  1.8 & ) & ( &  9.9, &  0.0 & ) & ( &  1.6\%, &  1.4, &  1.4, &  1.9 & ) & ( &  9.9, &  0.0 & ) \\
    & \fertac   & ( &  80.3\%, & 1.0, & 1.0, &  1.2 & ) & ( &  9.4, &  8.8 & ) & ( & 51.2\%, & 1.0, & 1.0, &  1.4 & ) & ( &  9.4, &  9.8 & ) & ( & 42.2\%, &  1.0, &  1.0, &  1.3 & ) & ( &  9.5, &  9.8 & ) \\
    & \twocatac & ( &  98.8\%, & 1.0, & 1.0, &  1.0 & ) & ( &  9.3, &  7.9 & ) & ( & 89.1\%, & 1.0, & 1.0, &  1.2 & ) & ( &  9.1, &  9.2 & ) & ( & 61.7\%, &  1.0, &  1.0, &  1.2 & ) & ( &  9.3, &  9.3 & ) \\
    & \herad    & ( & 100.0\%, & 1.0, & 1.0, &  1.0 & ) & ( &  9.3, &  7.8 & ) & ( &100.0\%, & 1.0, & 1.0, &  1.0 & ) & ( &  9.0, &  9.2 & ) & ( &100.0\%, &  1.0, &  1.0, &  1.0 & ) & ( &  9.1, &  9.4 & ) \\
\addlinespace 
\multirow{5}{*}{\rotatebox[origin=c]{0}{$(4_{\mathcal{B}},16_{\mathcal{L}})$}}
    & \otac[L]  & ( &   0.0\%, & 2.2, & 2.1, &  4.7 & ) & ( &  0.0, & 10.9 & ) & ( &  0.0\%, & 2.5, & 2.4, &  4.7 & ) & ( &  0.0, & 11.9 & ) & ( &  0.0\%, &  2.5, &  2.3, &  4.9 & ) & ( &  0.0, & 13.2 & ) \\
    & \otac[B]  & ( &   0.0\%, & 1.6, & 1.5, &  2.6 & ) & ( &  4.0, &  0.0 & ) & ( &  0.0\%, & 2.0, & 2.0, &  2.8 & ) & ( &  4.0, &  0.0 & ) & ( &  0.0\%, &  2.4, &  2.4, &  3.1 & ) & ( &  4.0, &  0.0 & ) \\
    & \fertac   & ( &  99.0\%, & 1.0, & 1.0, &  1.0 & ) & ( &  3.9, &  9.2 & ) & ( & 61.4\%, & 1.0, & 1.0, &  1.3 & ) & ( &  3.9, & 14.0 & ) & ( & 13.0\%, &  1.0, &  1.0, &  1.3 & ) & ( &  3.9, & 15.9 & ) \\
    & \twocatac & ( & 100.0\%, & 1.0, & 1.0, &  1.0 & ) & ( &  3.9, &  7.8 & ) & ( & 91.7\%, & 1.0, & 1.0, &  1.1 & ) & ( &  3.9, & 13.4 & ) & ( & 41.1\%, &  1.0, &  1.0, &  1.2 & ) & ( &  3.9, & 15.8 & ) \\
    & \herad    & ( & 100.0\%, & 1.0, & 1.0, &  1.0 & ) & ( &  3.9, &  7.8 & ) & ( &100.0\%, & 1.0, & 1.0, &  1.0 & ) & ( &  3.9, & 13.3 & ) & ( &100.0\%, &  1.0, &  1.0, &  1.0 & ) & ( &  3.9, & 15.8 & ) \\
\bottomrule
\end{tabular}
}
\end{table*}

Although these results are related to our exact simulation parameters, their
general trends are the same for longer task chains or different numbers of
resources. Additional experiments (not covered here for the sake of space) have
revealed that non-optimal strategies tend to perform worse when more tasks have
to be scheduled (more decisions to make), but better when more resources are
available (easier to have enough resources for the slowest stage).

\subsection{Core Usage}

Table~\ref{tab:stats} also provides the average number of big and little cores
used by each scheduling strategy for different resources available and SRs. As
our secondary objective is to use as many little cores as necessary to reduce
power consumption (Section~\ref{sec:def}), using more little cores and less big
ones is desirable. In general, strategies use more cores when more tasks are
replicable (right col., $SR=0.8$) to reduce the period.

We can see that \twocatac tends to use almost the same number of resources as
\herad. It uses at most $0.3$ more cores than the minimal, sometimes using more
big and less little cores. \fertac, in its part, tends to use more of both
resources in its solution. By greedily trying to use little cores in earlier
stages, it ends up missing opportunities to make better use of these cores later
in the pipeline. Nonetheless, even in its worst average results, \fertac
requires~$1.41$ little cores ($R=(4_{\mathcal{B}},16_{\mathcal{L}}), SR=0.2$)
or~$1.36$ cores in total ($R=(16_{\mathcal{B}},4_{\mathcal{L}}), SR=0.5$) more
than \herad.

\begin{figure}[!b]
  \centering
  \begin{subfigure}{0.47\columnwidth}
      \includegraphics[width=\columnwidth]%
      {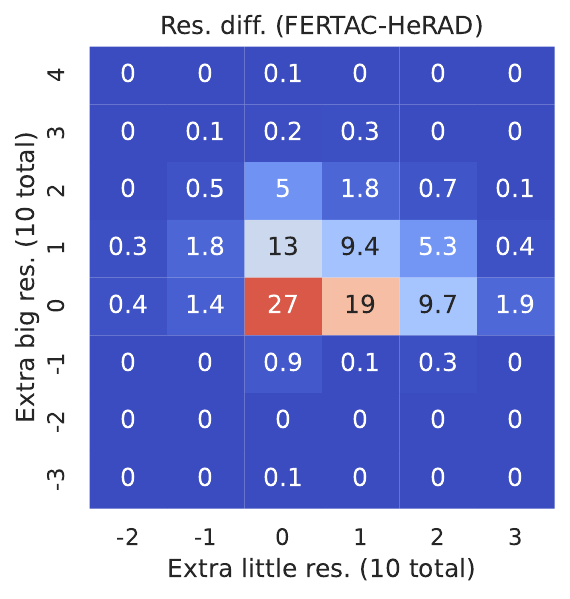}
      \caption{All results.}
      \label{fig:heat_all}
  \end{subfigure}
  \begin{subfigure}{0.47\columnwidth}
      \includegraphics[width=\columnwidth]%
      {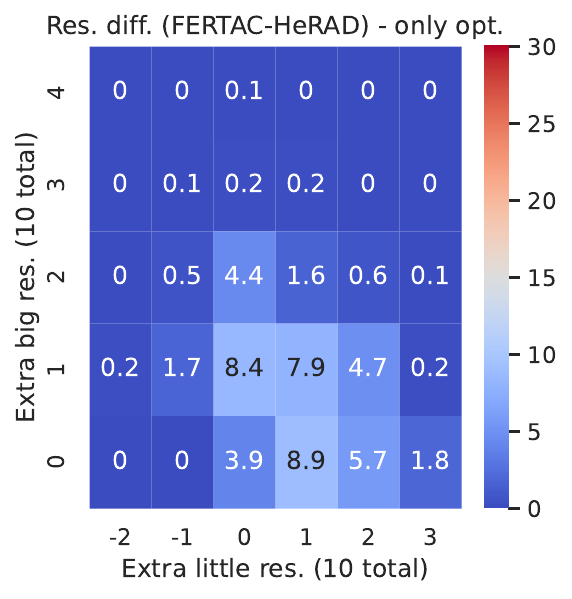}
      \caption{Only optimal periods.}
      \label{fig:heat_opt}
  \end{subfigure}
  \caption{Heatmaps with the differences (percentages) in resources used between
    \fertac and \herad for $R=(10_{\mathcal{B}},10_{\mathcal{L}})$ and
    $SR=0.5$.}
  \label{fig:heatmap}
\end{figure}

\begin{figure*}[htp]
  \centering
  \begin{subfigure}{0.475\columnwidth}
    \includegraphics[width=\columnwidth]%
    {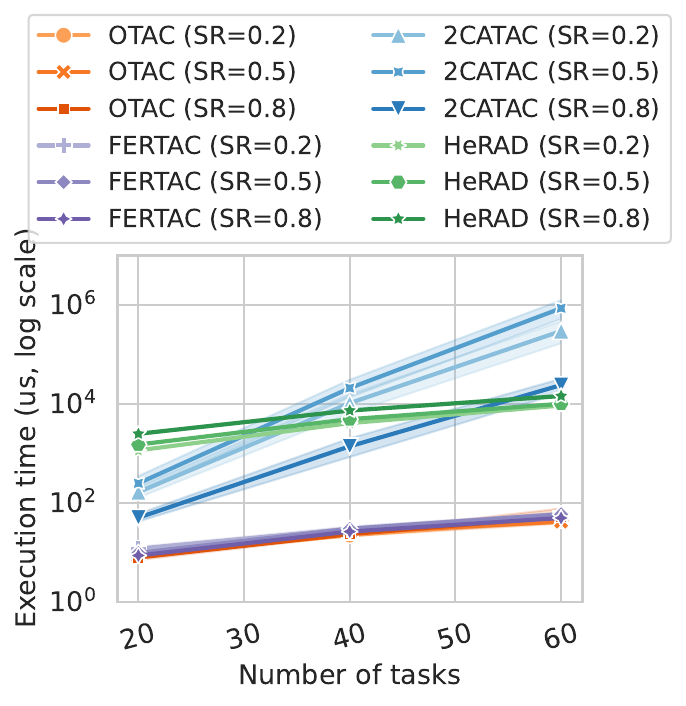}
    \caption{$R=(20_{\mathcal{B}},20_{\mathcal{L}})$.}
    \label{fig:20_res_split}
  \end{subfigure}
  \hspace{0.15cm}
  \begin{subfigure}{0.475\columnwidth}
    \includegraphics[width=\columnwidth]%
    {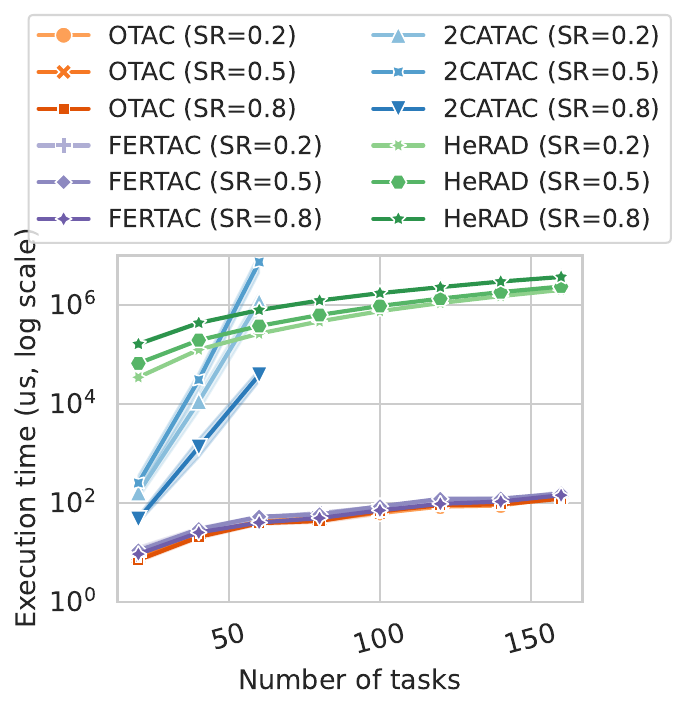}
    \caption{$R=(100_{\mathcal{B}},100_{\mathcal{L}})$.}
    \label{fig:100_res_split}
  \end{subfigure}
  \hspace{0.15cm}
  \begin{subfigure}{0.475\columnwidth}
    \includegraphics[width=\columnwidth]%
    {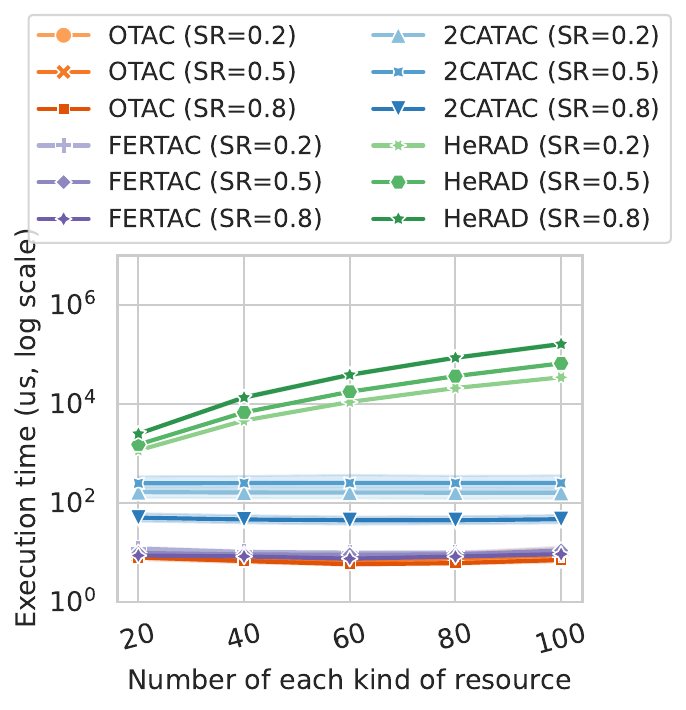}
    \caption{$|\tasks{}|=20$.}
    \label{fig:20_tasks}
  \end{subfigure}
  \hspace{0.15cm}
  \begin{subfigure}{0.475\columnwidth}
    \includegraphics[width=\columnwidth]%
    {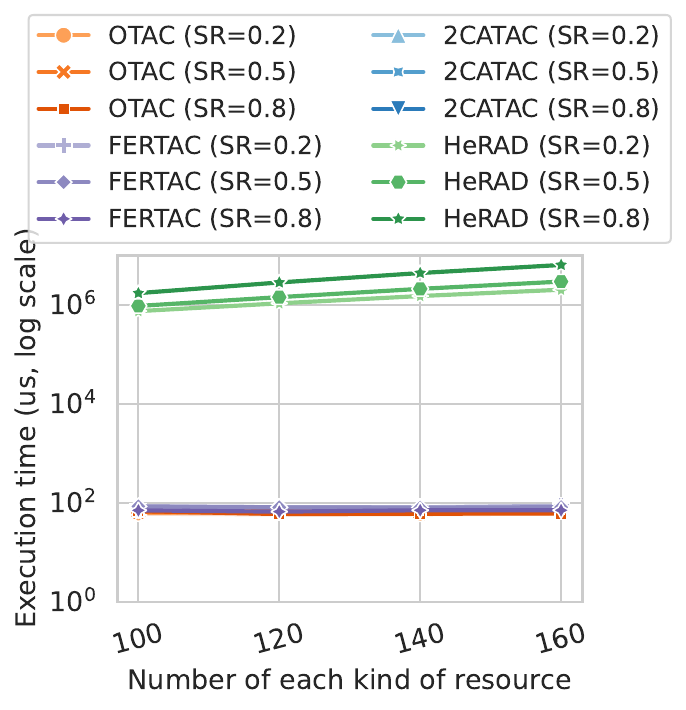}
    \caption{$|\tasks{}|=100$.}
    \label{fig:100_tasks}
  \end{subfigure}
  \caption{Average strategy times ($\mu$s, log. scale) with fixed numbers of
    resources (Subfigures~\ref{fig:20_res_split} and~\ref{fig:100_res_split}) or
    tasks (Subfigures~\ref{fig:20_tasks} and~\ref{fig:100_tasks}).}
  \label{fig:ress_tasks}
\end{figure*}

Figure~\ref{fig:heatmap} explores in more detail the differences between \fertac
and \herad for one scenario where \fertac achieved the minimum period $51.2\%$
of the times. Each cell in the heatmaps represent the percentage of times that
\fertac uses more, less, or the same number of big and little cores than
\herad. When considering all results (Figure~\ref{fig:heat_all}), \fertac uses
at most 1 or 2 extra cores $59\%$ and $83.1\%$ of the times, respectively. When
considering only the results where \fertac achieves minimal periods
(Figure~\ref{fig:heat_opt}), the situations where at most 1 or 2 extra cores
were necessary change to $21.2\%$ and $39.2\%$ of the times. These differences
may be justified given the difference in computational complexity of the
strategies, as will be seen next.

\subsection{Strategies Execution Times}

Figure~\ref{fig:ress_tasks} shows the execution times of the different
strategies in $\mu s$ for $R=(20_{\mathcal{B}},20_{\mathcal{L}})$
(Figure~\ref{fig:20_res_split}) and $R=(100_{\mathcal{B}},100_{\mathcal{L}})$
(Figure~\ref{fig:100_res_split}). Each point represents the average of 50 runs.
Each line represents a strategy computing schedules for task chains with
different stateless ratios. The lower the time, the better.

We can notice that \fertac displays the same behavior as \otac. Having the
lowest computational complexity among proposed strategies, its execution times
are in the order of $10$ to $100~\mu s$ and they grow proportionally to the
number of tasks.

\twocatac has an exponential complexity in the number of tasks, so its results
are limited to up to 60~tasks. Besides its rapid growth in execution time,
\twocatac shows distinct execution times depending on how many replicable tasks
there are. Its execution times increase when $SR$ goes from $0.2$ to~$0.5$, but
then it decreases by almost two orders of magnitude when $SR=0.8$. \twocatac is
able to pack many tasks together in longer pipeline stages when they are
replicable, leading to shorter recursions and fewer comparisons. Nonetheless,
its exponential behavior limits its usage to short task chains.

\herad's execution times grow with the square of the number of tasks, and they
already start in the order of ms in the tested scenarios. Its averages go from
$2.5$~ms to $14.4$~ms from 20 to 60 tasks ($R=(20_{\mathcal{B}},
20_{\mathcal{L}}), SR=0.8$), and from $78.5$~ms to $3656$~ms ($46.6\times$)
from 20 to 160 tasks ($R=(100_{\mathcal{B}},100_{\mathcal{L}}), SR=0.8$). Its
execution times are smaller when fewer tasks are replicable due to an
optimization in RecomputeCell (see Section~\ref{sec:dp}).

Figure~\ref{fig:ress_tasks} reflects the effects of increasing the number of
resources. It shows that the greedy strategies stay mostly unaffected, while
\herad's execution times grow. For instance, its execution times go from
$1.72~s$ to $6.38~s$ when going from $R=(100_{\mathcal{B}},100_{\mathcal{L}})$
to $R=(160_{\mathcal{B}},160_{\mathcal{L}})$ ($SR=0.8$ in
Figure~\ref{fig:100_tasks}) --- a $3.7\times$ increase in time for a $1.6\times$
increase in resources. Although these times are not prohibitive when
precomputing a schedule for contemporary task chains and processors, \herad
could be more difficult to use in bigger scenarios or real-time. We see how its
hypothetically optimal schedules behave when applied in a real scenario next.

\section{Runtime System}\label{sec:runtime}

\begin{figure*}[t]
  \centering
  \includegraphics[width=1\linewidth]{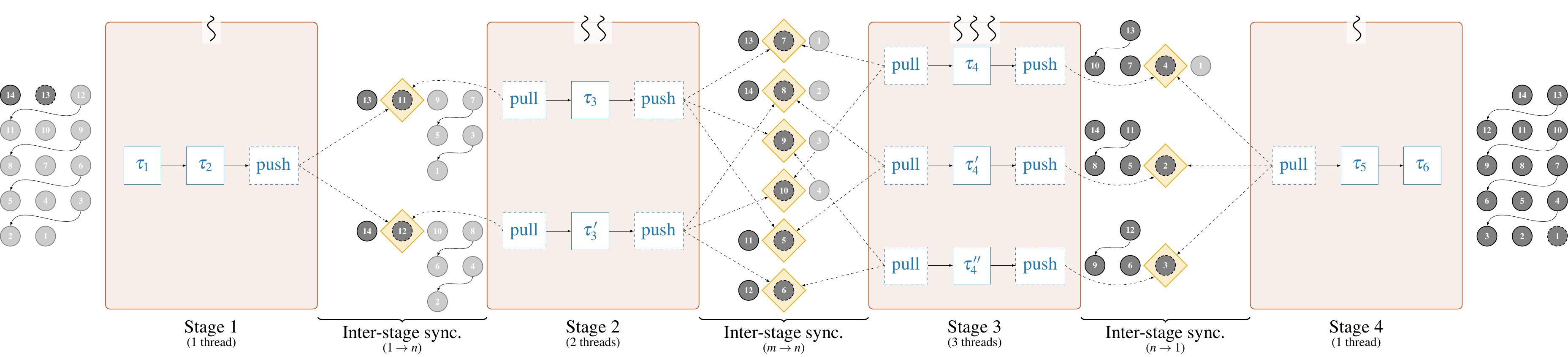}
  \caption{Example of the \spu inter-stage producer–consumer algorithm in a
    4-stage pipeline with 7 threads and 6 tasks, decomposed as follow:
    $(\tau_{1},\tau_{2})_{r = 1},(\tau_{3})_{r = 2},(\tau_{4})_{r = 3},
    (\tau_{5},\tau_{6})_{r = 1}$. 14 streams are considered. Black circles
    indicate streams that have not yet been processed, while light gray circles
    represent processed streams. Yellow diamonds denote shared buffers, each
    capable of storing a single stream. \emph{push} and \emph{pull} tasks (shown
    as rectangles with blue dashed borders) are automatically inserted by the
    runtime to handle synchronization. Dashed arrows highlight the buffers
    associated with these push and pull tasks.}
  \label{fig:synchros}
\end{figure*}

Previous section presented analytical results where communication and
synchronization overheads were neglected. In this section, we outline the main
aspects of our runtime implementation, which is provided as part of the \spu
open-source project~\cite{cassagne2023spu}.

\subsection{Features}

Thread synchronization relies on the portable C++11 thread library, which, on
Linux platforms, is implemented on top of the POSIX threads library
(\verb|pthread|). A thread pool is responsible for creating the threads at
program startup, and these threads are subsequently reused throughout execution.
This design reduces the overhead associated with repeatedly creating, pausing,
and resuming pipelines. The thread pool mechanism was first introduced in \spu
v1.2.1.

Then, threads are synchronized to exchange data buffers between pipeline
stages. Depending on the replication level, one or multiple threads may be
assigned to a given stage, and synchronization must ensure that the ordering of
input streams is preserved at the output. To achieve this, a producer–consumer
algorithm is employed, designed to minimize the reliance on critical sections in
the code. In particular, threads belonging to the same stage do not require
mutual synchronization, since they exclusively perform write (\emph{push}) or
read (\emph{pull}) operations on dedicated buffers. Consequently, the number of
buffers allocated is at least equal to the number of threads assigned to the
stage.

Two scenarios must be considered for inter-stage synchronization. The simpler
and most common case corresponds to a $1 \rightarrow n$ or $n \rightarrow 1$
configuration, where either a single thread in the first stage communicates with
$n$ threads in the next stage, or conversely, $n$ threads communicate with a
single thread. In such situations, exactly $n$ buffers are required. The
single-threaded stage either feeds or consumes these $n$ buffers in a
round-robin manner. Each buffer can store one or more streams and is associated
with two index variables, \verb|first| and \verb|last|. The \verb|first| index
points to the oldest stream available for consumption, while the \verb|last|
index refers to the most recently inserted stream. In addition, a \verb|counter|
variable tracks the number of available streams (i.e., streams ready for
consumption) within the buffer. Streams are stored in a circular fashion, and a
zero-copy mechanism based on pointer swapping is employed (see the original \spu
article for further details~\cite{cassagne2023spu}). The two indices and the
counter are safeguarded through atomic instructions, implemented in C++ with the
\verb|std::atomic<uint32_t>| type. On x86 architectures, these operations rely
on the hardware Fetch-and-Add (FAA) \verb|LOCK XADD| instruction, while on ARM
architectures they are supported from ISA version ARMv8.1 onward through the
\verb|LDADD| instruction.

The second case, which generalizes the $1 \rightarrow n$ and $n \rightarrow 1$
scenarios, arises when both consecutive stages contain multiple threads, denoted
as an $m \rightarrow n$ configuration. In this setting, $\max(m,n)$ buffers are
often insufficient to prevent additional synchronization among threads within
the same stage. A straightforward but naive solution is to allocate $m \times n$
buffers, such that each thread writes into a buffer in a round-robin manner with
a stride of $m$ or $n$. However, this approach is clearly suboptimal. In the
simplest case where $m = n$, allocating $m^2$ buffers is unnecessary, since $m$
buffers are sufficient. More generally, the optimal number of buffers
corresponds to the least common multiple (LCM) of $m$ and $n$, which
significantly reduces memory footprint compared to the $m \times n$ solution.
The implementation of this $m \rightarrow n$ producer–consumer algorithm
represents a contribution of this work and it is illustrated in
Figure~\ref{fig:synchros}. This feature was specifically introduced to provide
full support for the \fertac, \twocatac, and \herad algorithms, which may
generate solutions where consecutive stages are replicated and it has been
integrated into \spu v1.6.0.

Both passive and active waiting modes are supported; however, in this work, only
the passive mode is evaluated. It is implemented using the \verb|wait| and
\verb|notify| routines, relying on operating system timers and interrupts. The
passive mode ensures that a waiting thread is not executed on the CPU, thereby
avoiding unnecessary energy consumption--a factor of particular relevance for
this study. In the following real-world experiments, the pipeline buffer
capacity is set to a single stream. And, one stream corresponds to a set of
frames.

\subsection{From Scheduling to Runtime}

The proposed schedulers, such as \fertac, \twocatac, and \herad, are implemented
in a separate open-source project: AMP-scheduling~\cite{amp-scheduling_v2.0}.
The workflow proceeds as follows:
\begin{inparaenum}[(i)]
  \item the system is executed in \spu in sequential mode to collect profiling
    information on the different types of cores,
  \item then, these profiling results are provided as input to AMP-scheduling,
    which generates a schedule in JSON format,
  \item finally, \spu reads this JSON file and applies the computed scheduling
    at runtime.
\end{inparaenum}

Since the proposed scheduling strategies are static, using a JSON file
interface provides considerable flexibility, facilitating the integration of
future scheduling strategies and enabling easy adjustments to existing
schedules. Furthermore, the JSON interface allows tuning of runtime parameters,
such as passive or active waiting, synchronization buffer sizes, and thread
pinning strategies. In \spu v1.8.0, a specific scheduler, reading information
from a JSON file and generating a pipeline object, has been added.

\begin{listing}[htp]
\begin{minted}{json}
{
  "platform": "an heterogeneous CPU [3b,3l]",
  "resources": {
    "p-core": {
      "node-list": ["core0-2"], "cluster-size": 1, "smt": 2
    },
    "e-core": {
      "node-list": ["core3-5"], "cluster-size": 3, "smt": 1
    }
  },
  "scheduler-name": "FERTAC (4-stage solution)",
  "schedule": [ {
        "tasks": 1, "threads": 1, "core-type": "e-core",
        "pinning-policy": "packed", "sync-buff-size": 1,
        "sync-waiting-type": "active"
      },{
        "tasks": 1, "threads": 1, "core-type": "e-core",
        "pinning-policy": "packed", "sync-buff-size": 8,
        "sync-waiting-type": "passive"
      },{
        "tasks": 1, "threads": 1, "core-type": "p-core",
        "pinning-policy": "packed", "sync-buff-size": 1
      },{
        "tasks": 2, "threads": 1, "core-type": "e-core",
        "pinning-policy": "packed"
      }
    ]
}
\end{minted}
\caption{\fertac JSON schedule from Fig.~\ref{fig:fertac}, $R = (3_\mathcal{B},
  3_\mathcal{L})$.}
\label{lst:json}
\vspace{-0.3cm}
\end{listing}

Listing~\ref{lst:json} provides an example of a JSON file for the \fertac
solution illustrated in Figure~\ref{fig:fertac}. Lines 2–10 describe the
available resources: three big cores (or \verb|p-cores|) mapped to cores 0, 1,
and 2 ($C[(0),(1),(2)]_\mathcal{B}$), and three little cores (or \verb|e-cores|)
mapped to cores 3, 4, and 5 ($C[(3,4,5)]_\mathcal{L}$). In this paper, the
discovery of the number and types of cores relies on the \emph{hwloc}
library~\cite{broquedis2010hwloc} (via the \verb|hwloc-ls| executable).
Consequently, core identifiers are expressed using hwloc logical numbering.
Then, lines 11–27 define the schedule as an array of JSON objects. Each object
specifies a pipeline stage with: $n$ the number of tasks (\verb|tasks| field),
$r$ the number of replications (\verb|threads| field), and, $v$ the assignment
to a resource type (\verb|core-type| field). Additional optional fields, such as
\verb|pinning-policy|, \verb|sync-buff-size|, and \verb|sync-waiting-type|,
allow tuning of runtime parameters.

\section{Real-world SDR Experiments}\label{sec:sdr}

In this section, we employ our three strategies and \otac~\cite{orhan2025b} to
schedule an implementation of the DVB-S2 digital communication
standard~\cite{dvbs2} on \spu~\cite{cassagne2023spu} and on four heterogeneous
multicore processors. Then, we evaluate their throughput, energy consumption,
and the impact of thread pinning policies on performance. We provide more
details about our experimental environments and results in the next
subsections. Source code, result files, and scripts are freely available
online~\cite{artifact_v1.0}.

\subsection{Experimental Environment}

\begin{figure*}[htp!]
  \centering
  \includegraphics[width=1\linewidth]{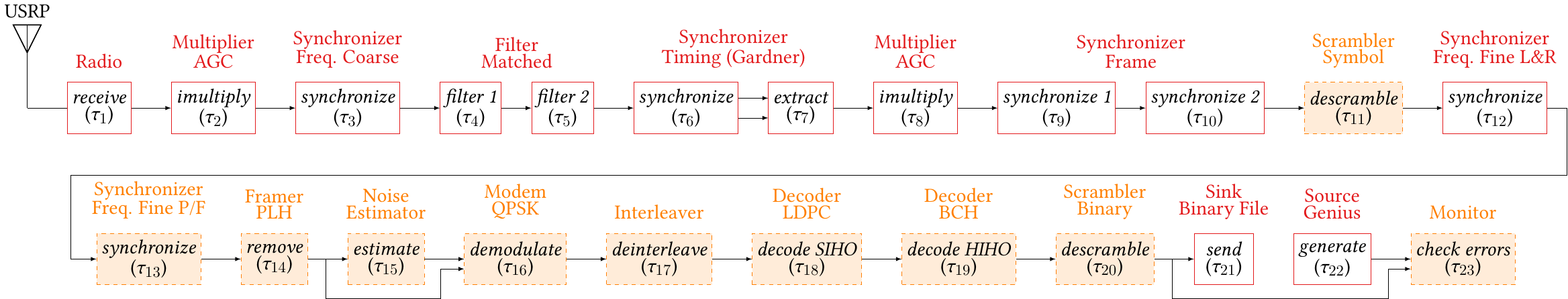}
  \caption{DVB-S2 receiver chain with data dependencies. Full, transparent
    (resp. dashed, shaded) boxes represent sequential (resp. replicable) tasks.}
  \label{fig:dvbs2_rx}
\end{figure*}

\begin{table*}[!h]
  \centering
  \caption{DVB-S2 receiver's average task latency on the evaluated platforms.
    The two slowest sequential and replicable tasks are highlighted in
    \colorbox{Paired-5!15}{red} and \colorbox{Paired-7!15}{orange},
    respectively. \spu streams contain multiple frames (or \emph{fra}),
    depending on the target platform and on the number of elements supported by
    the SIMD registers.}

  \label{tab:sdr_dvbs2_tasks_thr_lat}
  {\resizebox{0.8\linewidth}{!}{
  {
  \begin{tabular}{l l c r r r r r r r r r r}
    \toprule
    & & & \multicolumn{8}{c}{Average Latency ($\mu$s)} \\
    \cmidrule(lr){4-11}
    \multicolumn{3}{c}{Task} & \multicolumn{2}{c}{Orange Pi 5+ (4 fra)} & \multicolumn{2}{c}{Mac Studio (4 fra)} & \multicolumn{2}{c}{AI370 (16 fra)} & \multicolumn{2}{c}{X7 Ti (8 fra)} \\
    \cmidrule(lr){1-3} \cmidrule(lr){4-5} \cmidrule(lr){6-7} \cmidrule(lr){8-9} \cmidrule(lr){10-11}
    Id & Name & Rep. & $\mathcal{B}$ & $\mathcal{L}$ & $\mathcal{B}$ & $\mathcal{L}$ & $\mathcal{B}$ & $\mathcal{L}$ & $\mathcal{B}$ & $\mathcal{L}$ \\
    \midrule
                            $\tau_{ 1}$  &                  Radio --               \emph{receive} &  \xm & $  193.4$ & $  319.3$ & $  54.0$ & $  245.6$ & $  216.3$ & $  227.1$ & $  139.2$ & $  178.5$ \\
                            $\tau_{ 2}$  &         Multiplier AGC --             \emph{imultiply} &  \xm & $  192.1$ & $  646.3$ & $  75.0$ & $  149.3$ & $  276.5$ & $  338.2$ & $  135.0$ & $  318.3$ \\
                            $\tau_{ 3}$  &     Sync. Freq. Coarse --           \emph{synchronize} &  \xm & $  476.1$ & $ 1052.0$ & $  96.2$ & $  496.1$ & $  156.7$ & $  206.9$ & $  111.6$ & $  428.2$ \\
                            $\tau_{ 4}$  &         Filter Matched --       \emph{filter (part 1)} &  \xm & $  861.1$ & $ 3202.9$ & $ 319.7$ & $  892.6$ & $  661.5$ & $  904.7$ & $  328.9$ & $  715.4$ \\
                            $\tau_{ 5}$  &         Filter Matched --       \emph{filter (part 2)} &  \xm & $  905.9$ & $ 3261.1$ & $ 316.9$ & $  880.0$ & $  711.3$ & $  935.7$ & $  326.2$ & $  747.0$ \\
    \rowcolor{Paired-5!15}  $\tau_{ 6}$  &           Sync. Timing --           \emph{synchronize} &  \xm & $ 1151.4$ & $ 3371.3$ & $ 950.9$ & $ 1470.8$ & $ 3262.0$ & $ 4307.5$ & $ 1342.5$ & $ 2397.8$ \\
                            $\tau_{ 7}$  &           Sync. Timing --               \emph{extract} &  \xm & $  137.9$ & $  346.7$ & $  50.7$ & $  105.6$ & $  135.4$ & $  179.2$ & $   59.5$ & $  142.8$ \\
                            $\tau_{ 8}$  &         Multiplier AGC --             \emph{imultiply} &  \xm & $   98.0$ & $  322.0$ & $  36.9$ & $   75.2$ & $  132.8$ & $  162.9$ & $   63.5$ & $  158.2$ \\
    \rowcolor{Paired-5!15}  $\tau_{ 9}$  &            Sync. Frame --  \emph{synchronize (part 1)} &  \xm & $  999.6$ & $ 3803.3$ & $ 361.9$ & $ 1051.3$ & $  758.9$ & $ 1037.2$ & $  364.6$ & $  849.7$ \\
                            $\tau_{10}$  &            Sync. Frame --  \emph{synchronize (part 2)} &  \xm & $  226.8$ & $  606.1$ & $  53.6$ & $  168.0$ & $  242.5$ & $  266.7$ & $   83.0$ & $  208.5$ \\
                            $\tau_{11}$  &       Scrambler Symbol --            \emph{descramble} &  \cm & $   80.8$ & $  178.6$ & $  16.2$ & $   60.9$ & $   72.6$ & $   76.6$ & $   24.6$ & $   68.7$ \\
                            $\tau_{12}$  &  Sync. Freq. Fine L\&R --           \emph{synchronize} &  \xm & $  239.3$ & $  535.9$ & $  50.1$ & $  246.2$ & $   85.8$ & $  111.1$ & $   54.2$ & $  203.5$ \\
                            $\tau_{13}$  &   Sync. Freq. Fine P/F --           \emph{synchronize} &  \cm & $  481.2$ & $ 1496.4$ & $  99.5$ & $  448.2$ & $  377.6$ & $  499.7$ & $  209.5$ & $  367.2$ \\
                            $\tau_{14}$  &             Framer PLH --                \emph{remove} &  \cm & $   75.2$ & $  214.3$ & $  25.3$ & $   64.9$ & $  312.6$ & $  316.3$ & $   55.0$ & $  100.1$ \\
                            $\tau_{15}$  &        Noise Estimator --              \emph{estimate} &  \cm & $   47.6$ & $  119.1$ & $  40.4$ & $   64.8$ & $   94.4$ & $  123.3$ & $   33.0$ & $   65.4$ \\
    \rowcolor{Paired-7!15}  $\tau_{16}$  &             Modem QPSK --            \emph{demodulate} &  \cm & $ 4246.1$ & $ 8853.7$ & $2260.9$ & $ 4831.2$ & $ 2491.1$ & $ 3281.4$ & $ 2122.3$ & $ 5750.1$ \\
                            $\tau_{17}$  &            Interleaver --          \emph{deinterleave} &  \cm & $   57.3$ & $  141.2$ & $  21.5$ & $   58.2$ & $   66.4$ & $   80.2$ & $   31.5$ & $   48.2$ \\
                            $\tau_{18}$  &           Decoder LDPC --           \emph{decode SIHO} &  \cm & $  698.9$ & $ 2822.3$ & $ 155.2$ & $  473.8$ & $  523.8$ & $  670.5$ & $  245.9$ & $ 1201.9$ \\
    \rowcolor{Paired-7!15}  $\tau_{19}$  &            Decoder BCH --           \emph{decode HIHO} &  \cm & $ 6342.1$ & $12779.8$ & $2653.3$ & $ 7240.2$ & $ 3959.0$ & $ 5220.4$ & $ 6038.0$ & $ 8121.8$ \\
                            $\tau_{20}$  &       Scrambler Binary --            \emph{descramble} &  \cm & $  601.3$ & $ 2170.5$ & $ 192.7$ & $  464.3$ & $  953.3$ & $ 1273.4$ & $  525.8$ & $  558.0$ \\
                            $\tau_{21}$  &       Sink Binary File --                  \emph{send} &  \xm & $   37.9$ & $  130.8$ & $   9.4$ & $   32.9$ & $   27.5$ & $   35.2$ & $   26.1$ & $   75.4$ \\
                            $\tau_{22}$  &          Source Genius --              \emph{generate} &  \xm & $   21.1$ & $   42.0$ & $   4.2$ & $   13.7$ & $   31.1$ & $   31.1$ & $   15.4$ & $   22.4$ \\
                            $\tau_{23}$  &                Monitor --          \emph{check errors} &  \cm & $   24.4$ & $  110.4$ & $   9.4$ & $   20.4$ & $   41.9$ & $   43.7$ & $    9.2$ & $   20.7$ \\
    \midrule
                                                                               \multicolumn{3}{r}{Total} & $18196.5$ & $46526.9$ & $7854.9$ & $19555.2$ & $15591.9$ & $20329.8$ & $12345.6$ & $22748.8$ \\
    \bottomrule
  \end{tabular}
  }
  }}
\end{table*}

Experiments were executed on four platforms, representative of the market
diversity at the time of writing:
\begin{enumerate}
  \item[(i)] an \textbf{Orange Pi 5+} SBC (Rockchip~RK3588 SoC with 4~ARM
    Cortex-A76 cores (big) @ 2.4~GHz  and 4~ARM Cortex-A55 cores (little) @
    1.8 GHz based on ARMv8.2-A ISA with 128-bit NEON SIMD, 16~GB LPDDR4, 256~GB
    eMMC) running Linux (Ubuntu~22.04.4~LTS, kernel~5.10.160, \textit{ondemand}
    CPUfreq governor ,g++~11.4.0).
    The OS scheduler is \textit{Completely Fair scheduler (CFS)} with the
    \textit{Capacity Aware Scheduler (CAS)} extension enabled, which helps the
    scheduler be aware of the heterogeneous architecture, but makes no choices
    to improve energy efficiency.
  \item[(ii)] an Apple \textbf{Mac Studio} 2022 (Apple Silicon M1 Ultra with
    16~Firestorm cores (big) @ 3.2~GHz and 4~Icestorm cores (little) @ 2~GHz
    based on the ARMv8.5-A ISA with 128-bit NEON SIMD, 64~GB LPDDR5 @ 6400~MT/s,
    2~TB SSD) running Linux (Fedora~40 Asahi~Remix kernel~6.11.8-400,
    \textit{schedutil} CPUfreq governor, g++~14.2.1).
    The OS scheduler is \textit{Earliest Eligible Virtual Deadline First
    (EEVDF)} with the \textit{Energy Aware Scheduling (EAS)} extension enabled,
    which include the heterogeneity awareness and  aims at improving energy
    efficiency-performance tradeoff.
  \item[(iii)] a Minisforum EliteMini \textbf{AI370} PC (AMD Ryzen AI~9~HX~370,
    4 Zen 5 cores (big) @ 2~GHz and 8 Zen 5c cores (little) @ 2~GHz based on the
    x86-64 ISA with 512-bit AVX-512 SIMD, 32~GB LPDDR5X @ 7500~MT/s, 1~TB SSD)
    running Linux (Ubuntu~24.04.2~LTS kernel~6.8.000, \textit{powersave} CPUfreq
    governor, g++~13.3.0).
    The OS scheduler is \EEVDF, with no extension enabled, because the
    architecture is not considered as heterogeneous by the kernel.
  \item[(iv)] a Minisforum AtomMan \textbf{X7 Ti} PC (Intel Ultra 9 185H,
    6~Redwood Cove p-cores (big) @ 2.3~GHz, 8~Crestmont e-cores (little) @
    1.8~GHz, and 2~Crestmont LPe-cores @ 1.0~GHz left unused based on the x86-64
    ISA with 256-bit AVX2 SIMD, 32~GB DDR5 @ 5600~MT/s, 1~TB NVMe SSD) running
    Linux (Ubuntu~24.10 kernel~6.11.0-13, \textit{powersave} CPUfreq governor,
    g++~14.2.0).
    The OS scheduler is \EEVDF, with the capacity detected, but the firmware
    gives same capacity for all cores, so no heterogeneous awareness.
\end{enumerate}

The reported frequencies are the base clocks. There is a boost technology on
AI370 and X7 Ti, enabling their processors to reach higher frequencies than the
ones reported in the previous paragraph, for the AI370 the difference between
the two type of cores is the maximum reachable frequency, 5.1~GHz for Zen~5
and 3.3~GHz Zen~5c, 2-way SMT is enabled for these cores and for Intel
Redwood~Cove cores, but it is not specifically targeted in this work.

All the platforms run \spu~v1.8.0~\cite{spu_v1.8} and the open source DVB-S2
transceiver v1.0.0~\cite{dvbs2_v1.0}. After profiling the DVB-S2 receiver on
all the four platforms\footnote{DVB-S2 receiver parameters: transmission phase,
running for 1~minute, inter-frame level $\in \{4_\text{NEON}, 8_\text{AVX2},
16_\text{AVX-512}\}$, $K = 14232$, $N = 16740$, $R = 8/9$, MODCOD~2, LDPC
horizontal layered NMS 10 iterations with early stop criterion, error-free SNR
zone.} (Table~\ref{tab:sdr_dvbs2_tasks_thr_lat}), schedules were computed using
all cores and half of them. Different thread pinning policies were used (more in
Section~\ref{subsec:pinning}). Each schedule was executed ten times for 1~minute
each and the achieved throughputs (in Mb/s) were obtained. This period
was chosen to measure the energy at steady point.

Energy measurements from the socket were made using the Dalek dedicated power
measurement platform~\cite{cassagne2025}. This dedicated hardware is able to get
1000 samples per second at the milliwatt resolution. The measurements are made
on the DC side (between the power supply and the mini-PCs). Thus, it excludes
the energy consumption of the power supply itself. In our tests, with a low
resolution external wattmeter, we observed that the power supply consumes 1 to
3~Watts to convert  220 V AC to DC. For Mac Studio, we were not able to use the
Dalek's measurement platform because the PSU is located inside the Mac case.
Thus, a low resolution external wattmeter (between the AC socket and the PSU
input) was used.

Energy measurements based on MSRs were also conducted on AI370 and X7 Ti
platforms. They are based on Intel RAPL \emph{pkg-0} domain which focus on the
whole SoC consumption. This domain excludes the energy consumption of the RAM
and the motherboard. The \verb|powerstat| tool is used and one sample every
2~seconds is taken in order to limit the probing effect.

\subsection{DVB-S2 Receiver Chain}

Figure~\ref{fig:dvbs2_rx} provides an overview of the DVB-S2 receiver. The role
of the different tasks is detailed in the \spu{} article~\cite{cassagne2023spu}
and is not repeated here. From a purely computational perspective, the studied
processing chain is composed of tasks with heterogeneous computational
signatures.

In Table~\ref{tab:sdr_dvbs2_tasks_thr_lat}, each task has been profiled on each
platform. The most time-consuming tasks are $\tau_{[4,5]}$, $\tau_{6}$,
$\tau_{9}$, $\tau_{16}$, $\tau_{18}$, $\tau_{19}$ and $\tau_{20}$.
$\tau_{[4,5]}$ corresponds to a FIR filter that has been manually split into
two stages to benefit from pipeline parallelism. Its time complexity is
$\mathcal{O}(b N/4)$, where $N$ is the coded frame size in bits ($N = 16740$ in
the studied configuration) and $b$ the filter length. The implementation heavily
relies on vectorized code through the SIMD \MIPP wrapper~\cite{cassagne2018} and
is compute-bound due to its high arithmetic intensity ($b = 81$). $\tau_{6}$ is
one of the least optimized tasks in the system: the current implementation
involves numerous indirect memory accesses and branch instructions. Its
complexity is $\mathcal{O}(N/4)$ and it is latency-bound. $\tau_{9}$ is mainly
composed by two FIR filters, vectorized as in $\tau_{[4,5]}$, but with smaller
$b$ values ($b_1 = 25$ and $b_2 = 64$); as a result, it tends to lie between
memory- and compute-bound on CPUs. $\tau_{16}$ is a vectorized QPSK demodulation
with $\mathcal{O}(N/4)$ complexity and is compute-bound. $\tau_{18}$ is an LDPC
channel decoder and represents an important computational cost. Its complexity
can be expressed as $\mathcal{O}(d_v \times i \times N)$, where $d_v$ is the
average variable-node degree (typically $3 \leq d_v \leq 4$ in DVB-S2
parity-check matrices) and $i$ the number of iterations ($i = 10$ in this work).
This task can be either latency- or memory-bound; note that the LDPC decoder is
heavily optimized and this is why its execution time is reduced in the profiled
system. The code implements an early stop criterion ($i \leq 10$) and it has
been carefully vectorized through \MIPP. $\tau_{19}$ is a BCH channel decoder
whose complexity is $\mathcal{O}(tN)$, where $t$ denotes the error-correction
capability ($t = 8$ in the following experiments); as for $\tau_{18}$, this task
exhibits a mix of latency- and memory-bound behavior. It is worth noting that
this implementation has not been specifically optimized. Finally, $\tau_{20}$ is
a descrambler performing a memory rotation (and a light transformation) with
$\mathcal{O}(K)$ complexity (with $K = 14232$ the number of information bits)
and is therefore memory-bound.

Note that in such applications, the term memory-bound is imprecise, as it
encompasses both cache and RAM accesses. The nature of these accesses depends on
the frame size and the replication level, so the actual cost of a task may
differ from the values reported in Table~\ref{tab:sdr_dvbs2_tasks_thr_lat} when
multiple cores are used.
Similarly, compute-bound here differs from classical dense algebra scenarios,
where CPU peak floating-point performance can be achieved; in our case, peak
performance arises from a combination of floating-point operations, integer
arithmetic, comparisons, and bit-manipulation instructions.

\subsection{Solutions}

\begin{table*}[h]
  \centering
  \caption{Specification of the configurations $R$ used with the real-world
    DVB-S2 receiver. Limiting pipeline stages (according to the simulation) are
    highlighted in \colorbox{Paired-7!15}{orange} if replicable,
    \colorbox{Paired-5!15}{red} otherwise (packed pinning policy).
    $\mathcal{E}$~/~fra is the energy consumed to process one frame of
    14232~information bits.}
  \label{tab:dvbs2}
  {\resizebox{1.0\textwidth}{!}{
  \begin{tabular}{l c r l l l r r r r r r r r r r}
  \toprule
  & \multicolumn{1}{c}{} & \multicolumn{7}{c}{Solution} & \multicolumn{4}{c}{Info. Throughput (Mb/s)} & \multicolumn{2}{c}{Socket Cons.} \\
  \cmidrule(lr){3-9} \cmidrule(lr){10-13} \cmidrule(lr){14-15}
  &             &    &          &                                                                                                          &                 &                  &                  &    Period &      &      &       &       & Power & $\mathcal{E}$ / fra. \\
  & $R = (b,l)$ & Id & Strategy & Pipeline decomposition where a stage is $(n^{\text{tasks}},r_{v \in \{\mathcal{L},\mathcal{B}\}})$ & $|\mathsf{s}|$ & $b_\text{used}$ & $l_\text{used}$ & ($\mu$s) & Sim. & Real & Diff. & Ratio &   (W) &                  (mJ) \\
  \midrule
  \multirow{11}{*}{\rotatebox[origin=c]{90}{Orange Pi 5+}}
  & \multirow{5}{*}{$(2_{\mathcal{B}},2_{\mathcal{L}})$} & $\mathcal{S}_{1}$ & \otac[L] & $\StaL{15}{1},\StaL[h]{8}{1}$ & $2$ & $0$ & $2$ & $27050.9$ & $2.1$ & $2.1$ & $+0.0$ & $+1$\% & $3.9$ & $26.9$ \\
  & & $\mathcal{S}_{2}$ & \otac[B] & $\StaB[h]{16}{1},\StaB{7}{1}$ & $2$ & $2$ & $0$ & $10413.3$ & $5.5$ & $5.4$ & $+0.1$ & $+1$\% & $5.8$ & $15.3$ \\
  & & $\mathcal{S}_{3}$ & \fertac & $\StaL{4}{1},\StaL{3}{1},\StaB[h]{11}{1},\StaB{5}{1}$ & $4$ & $2$ & $2$ & $7251.4$ & $7.9$ & $7.6$ & $+0.2$ & $+3$\% & $6.8$ & $12.7$ \\
  & & $\mathcal{S}_{4}$ & \twocatac & $\StaB{15}{1},\StaL{3}{2},\StaB[h]{5}{1}$ & $3$ & $2$ & $2$ & $7027.0$ & $8.1$ & $8.2$ & $-0.1$ & $-1$\% & $6.9$ & $12.0$ \\
  & & $\mathcal{S}_{5}$ & \herad & $\StaB{13}{1},\StaL{5}{2},\StaB[h]{5}{1}$ & $3$ & $2$ & $2$ & $7027.0$ & $8.1$ & $8.2$ & $-0.1$ & $-0$\% & $7.0$ & $12.2$ \\
  \addlinespace
  & \multirow{5}{*}{$(4_{\mathcal{B}},4_{\mathcal{L}})$} & $\mathcal{S}_{6}$ & \otac[L] & $\StaL{8}{1},\StaL{7}{1},\StaL{3}{1},\StaL[h]{5}{1}$ & $4$ & $0$ & $4$ & $15233.7$ & $3.7$ & $3.7$ & $+0.0$ & $+0$\% & $4.6$ & $17.5$ \\
  & & $\mathcal{S}_{7}$ & \otac[B] & $\StaB{12}{1},\StaB[h]{7}{2},\StaB{4}{1}$ & $3$ & $4$ & $0$ & $5974.4$ & $9.5$ & $9.4$ & $+0.1$ & $+0$\% & $7.9$ & $11.8$ \\
  & & $\mathcal{S}_{8}$ & \fertac & $\StaL[h]{4}{1},\StaL{1}{1},\StaL{3}{1},\StaL{4}{1},\StaB{8}{3},\StaB{3}{1}$ & $6$ & $4$ & $4$ & $5220.7$ & $10.9$ & $10.5$ & $+0.4$ & $+3$\% & $7.8$ & $10.6$ \\
  & & $\mathcal{S}_{9}$ & \twocatac & $\StaL{3}{1},\StaL{1}{1},\StaB{7}{1},\StaL{4}{1},\StaB[h]{4}{3},\StaL{4}{1}$ & $6$ & $4$ & $4$ & $3781.5$ & $15.1$ & $14.6$ & $+0.4$ & $+2$\% & $10.1$ & $9.9$ \\
  & & $\mathcal{S}_{10}$ & \herad & $\StaB{5}{1},\StaB{8}{1},\StaL{4}{3},\StaB[h]{2}{2},\StaL{4}{1}$ & $5$ & $4$ & $4$ & $3520.5$ & $16.2$ & $15.3$ & $+0.9$ & $+5$\% & $10.5$ & $9.8$ \\
  \midrule
  \multirow{11}{*}{\rotatebox[origin=c]{90}{Mac Studio}}
  & \multirow{5}{*}{$(8_{\mathcal{B}},2_{\mathcal{L}})$} & $\mathcal{S}_{11}$ & \otac[L] & $\StaL[h]{16}{1},\StaL{7}{1}$ & $2$ & $0$ & $2$ & $11251.4$ & $5.1$ & $5.0$ & $+0.1$ & $+1$\% & $16.6$ & $47.6$ \\
  & & $\mathcal{S}_{12}$ & \otac[B] & $\StaB{5}{1},\StaB{3}{1},\StaB{7}{1},\StaB[h]{4}{4},\StaB{4}{1}$ & $5$ & $8$ & $0$ & $1272.8$ & $44.7$ & $41.9$ & $+2.9$ & $+6$\% & $40.9$ & $13.9$ \\
  & & $\mathcal{S}_{13}$ & \fertac & $\StaL{3}{1},\StaL{1}{1},\StaB[h]{2}{1},\StaB{9}{1},\StaB{5}{5},\StaB{3}{1}$ & $6$ & $8$ & $2$ & $1267.9$ & $44.9$ & $42.0$ & $+2.9$ & $+6$\% & $39.9$ & $13.5$ \\
  & & $\mathcal{S}_{14}$ & \twocatac & $\StaB{5}{1},\StaB{2}{1},\StaB{8}{1},\StaB[h]{4}{5},\StaL{4}{1}$ & $5$ & $8$ & $1$ & $1018.2$ & $55.9$ & $51.7$ & $+4.3$ & $+8$\% & $45.0$ & $12.4$ \\
  & & $\mathcal{S}_{15}$ & \herad & $\StaB{5}{1},\StaB{1}{1},\StaB{9}{1},\StaB[h]{4}{5},\StaL{4}{1}$ & $5$ & $8$ & $1$ & $1018.2$ & $55.9$ & $52.7$ & $+3.2$ & $+6$\% & $45.2$ & $12.2$ \\
  \addlinespace
  & \multirow{5}{*}{$(16_{\mathcal{B}},4_{\mathcal{L}})$} & $\mathcal{S}_{16}$ & \otac[L] & $\StaL[h]{14}{1},\StaL{5}{2},\StaL{4}{1}$ & $3$ & $0$ & $4$ & $6355.3$ & $9.0$ & $8.7$ & $+0.3$ & $+3$\% & $17.9$ & $29.3$ \\
  & & $\mathcal{S}_{17}$ & \otac[B] & $\StaB{5}{1},\StaB[h]{1}{1},\StaB{9}{1},\StaB{5}{6},\StaB{3}{1}$ & $5$ & $10$ & $0$ & $951.0$ & $59.9$ & $55.9$ & $+4.0$ & $+7$\% & $49.3$ & $12.6$ \\
  & & $\mathcal{S}_{18}$ & \fertac & $\StaL{3}{1},\StaL{1}{1},\StaL{1}{1},\StaB[h]{1}{1},\StaL{2}{1},\StaB{7}{1},\StaB{5}{6},\StaB{3}{1}$ & $8$ & $9$ & $4$ & $951.0$ & $59.9$ & $55.7$ & $+4.2$ & $+7$\% & $45.7$ & $11.7$ \\
  & & $\mathcal{S}_{19}$ & \twocatac & $\StaL{3}{1},\StaL{1}{1},\StaL{1}{1},\StaB[h]{1}{1},\StaB{9}{1},\StaB{5}{6},\StaL{3}{1}$ & $7$ & $8$ & $4$ & $951.0$ & $59.9$ & $55.8$ & $+4.1$ & $+7$\% & $46.0$ & $11.7$ \\
  & & $\mathcal{S}_{20}$ & \herad & $\StaL{3}{1},\StaL{1}{1},\StaL{1}{1},\StaB[h]{1}{1},\StaB{6}{1},\StaB{7}{6},\StaL{4}{1}$ & $7$ & $8$ & $4$ & $951.0$ & $59.9$ & $55.9$ & $+4.0$ & $+7$\% & $45.2$ & $11.5$ \\
  \midrule
  \multirow{11}{*}{\rotatebox[origin=c]{90}{AI370}}
  & \multirow{5}{*}{$(2_{\mathcal{B}},4_{\mathcal{L}})$} & $\mathcal{S}_{21}$ & \otac[L] & $\StaL{5}{1},\StaL{7}{1},\StaL{6}{1},\StaL[h]{5}{1}$ & $4$ & $0$ & $4$ & $6604.0$ & $34.5$ & $33.8$ & $+0.7$ & $+2$\% & $20.0$ & $8.4$ \\
  & & $\mathcal{S}_{22}$ & \otac[B] & $\StaB{15}{1},\StaB[h]{8}{1}$ & $2$ & $2$ & $0$ & $8094.5$ & $28.1$ & $27.4$ & $+0.7$ & $+2$\% & $20.7$ & $10.7$ \\
  & & $\mathcal{S}_{23}$ & \fertac & $\StaL{5}{1},\StaB{3}{1},\StaL{7}{1},\StaL[h]{3}{1},\StaB{1}{1},\StaL{4}{1}$ & $6$ & $2$ & $4$ & $4032.2$ & $56.5$ & $56.6$ & $-0.1$ & $-0$\% & $33.9$ & $8.5$ \\
  & & $\mathcal{S}_{24}$ & \twocatac & $\StaL{5}{1},\StaB{3}{1},\StaL{7}{1},\StaL[h]{3}{1},\StaB{1}{1},\StaL{4}{1}$ & $6$ & $2$ & $4$ & $4032.2$ & $56.5$ & $56.6$ & $-0.1$ & $-0$\% & $33.5$ & $8.4$ \\
  & & $\mathcal{S}_{25}$ & \herad & $\StaL{5}{1},\StaB{1}{1},\StaL{9}{1},\StaL[h]{3}{1},\StaB{1}{1},\StaL{4}{1}$ & $6$ & $2$ & $4$ & $4032.2$ & $56.5$ & $56.6$ & $-0.1$ & $-0$\% & $34.5$ & $8.7$ \\
  \addlinespace
  & \multirow{5}{*}{$(4_{\mathcal{B}},8_{\mathcal{L}})$} & $\mathcal{S}_{26}$ & \otac[L] & $\StaL{5}{1},\StaL[h]{1}{1},\StaL{9}{1},\StaL{5}{3},\StaL{3}{1}$ & $5$ & $0$ & $7$ & $4307.5$ & $52.9$ & $52.8$ & $+0.1$ & $+0$\% & $28.0$ & $7.6$ \\
  & & $\mathcal{S}_{27}$ & \otac[B] & $\StaB{5}{1},\StaB{7}{1},\StaB{6}{1},\StaB[h]{5}{1}$ & $4$ & $4$ & $0$ & $5013.0$ & $45.4$ & $45.2$ & $+0.3$ & $+0$\% & $31.0$ & $9.8$ \\
  & & $\mathcal{S}_{28}$ & \fertac & $\StaL{5}{1},\StaB[h]{1}{1},\StaL{9}{1},\StaL{4}{3},\StaL{4}{1}$ & $5$ & $1$ & $6$ & $3262.1$ & $69.8$ & $57.6$ & $+12.2$ & $+21$\% & $33.0$ & $8.2$ \\
  & & $\mathcal{S}_{29}$ & \twocatac & $\StaL{5}{1},\StaB[h]{1}{1},\StaL{9}{1},\StaL{4}{3},\StaL{4}{1}$ & $5$ & $1$ & $6$ & $3262.1$ & $69.8$ & $57.7$ & $+12.1$ & $+20$\% & $33.3$ & $8.2$ \\
  & & $\mathcal{S}_{30}$ & \herad & $\StaL{5}{1},\StaB[h]{1}{1},\StaL{7}{1},\StaL{6}{3},\StaL{4}{1}$ & $5$ & $1$ & $6$ & $3262.1$ & $69.8$ & $60.3$ & $+9.6$ & $+15$\% & $34.0$ & $8.0$ \\
  \midrule
  \multirow{11}{*}{\rotatebox[origin=c]{90}{X7 Ti}}
  & \multirow{5}{*}{$(3_{\mathcal{B}},4_{\mathcal{L}})$} & $\mathcal{S}_{31}$ & \otac[L] & $\StaL{15}{1},\StaL[h]{4}{2},\StaL{4}{1}$ & $3$ & $0$ & $4$ & $7561.1$ & $15.1$ & $14.6$ & $+0.5$ & $+3$\% & $27.2$ & $26.6$ \\
  & & $\mathcal{S}_{32}$ & \otac[B] & $\StaB{18}{1},\StaB[h]{1}{1},\StaB{4}{1}$ & $3$ & $3$ & $0$ & $6038.1$ & $18.9$ & $18.1$ & $+0.7$ & $+4$\% & $49.4$ & $38.8$ \\
  & & $\mathcal{S}_{33}$ & \fertac & $\StaL{5}{1},\StaL{3}{1},\StaL{7}{1},\StaB[h]{4}{3},\StaL{4}{1}$ & $5$ & $3$ & $4$ & $2812.6$ & $40.5$ & $35.3$ & $+5.2$ & $+14$\% & $75.8$ & $30.5$ \\
  & & $\mathcal{S}_{34}$ & \twocatac & $\StaL{5}{1},\StaB{10}{1},\StaB{3}{1},\StaL[h]{1}{3},\StaB{4}{1}$ & $5$ & $3$ & $4$ & $2707.3$ & $42.1$ & $37.4$ & $+4.6$ & $+12$\% & $64.8$ & $24.6$ \\
  & & $\mathcal{S}_{35}$ & \herad & $\StaB{5}{1},\StaB{10}{1},\StaB{3}{1},\StaL[h]{1}{3},\StaL{4}{1}$ & $5$ & $3$ & $4$ & $2707.3$ & $42.1$ & $38.7$ & $+3.4$ & $+8$\% & $66.6$ & $24.5$ \\
  \addlinespace
  & \multirow{5}{*}{$(6_{\mathcal{B}},8_{\mathcal{L}})$} & $\mathcal{S}_{36}$ & \otac[L] & $\StaL{5}{1},\StaL{5}{1},\StaL{5}{1},\StaL[h]{4}{4},\StaL{4}{1}$ & $5$ & $0$ & $8$ & $3780.6$ & $30.1$ & $26.1$ & $+4.0$ & $+15$\% & $30.6$ & $16.7$ \\
  & & $\mathcal{S}_{37}$ & \otac[B] & $\StaB{8}{1},\StaB{7}{1},\StaB[h]{4}{3},\StaB{4}{1}$ & $4$ & $6$ & $0$ & $2812.6$ & $40.5$ & $37.6$ & $+2.9$ & $+7$\% & $69.3$ & $26.2$ \\
  & & $\mathcal{S}_{38}$ & \fertac & $\StaL{3}{1},\StaL{2}{1},\StaB{3}{1},\StaL{4}{1},\StaL{6}{5},\StaB[h]{1}{4},\StaB{4}{1}$ & $7$ & $6$ & $8$ & $1509.5$ & $75.4$ & $63.7$ & $+11.7$ & $+18$\% & $79.8$ & $17.8$ \\
  & & $\mathcal{S}_{39}$ & \twocatac & $\StaB{5}{1},\StaB[h]{1}{1},\StaB{9}{1},\StaB{3}{2},\StaL{2}{7},\StaL{3}{1}$ & $6$ & $5$ & $8$ & $1342.5$ & $84.8$ & $71.0$ & $+13.8$ & $+19$\% & $77.0$ & $15.4$ \\
  & & $\mathcal{S}_{40}$ & \herad & $\StaB{5}{1},\StaB[h]{1}{1},\StaB{6}{1},\StaB{4}{2},\StaL{3}{7},\StaL{4}{1}$ & $6$ & $5$ & $8$ & $1342.5$ & $84.8$ & $71.5$ & $+13.3$ & $+18$\% & $78.7$ & $15.7$ \\
  \bottomrule
  \end{tabular}
  }}
\end{table*}

Table~\ref{tab:dvbs2} summarizes the solutions found on the different platforms
using half and all of their cores. They were obtained using the task profiling
information listed in Table~\ref{tab:sdr_dvbs2_tasks_thr_lat}. As expected, task
latency is always higher on little cores. However, the latency ratio between
little and big cores varies according to task and platform. It underlines the
need to profile each task independently. It is worth mentioning that, depending
on the platform, the number of frames processed when a task is triggered may
vary. For instance, the Mac Studio processes 4 frames per task execution while
the AI370 processes 16 frames. This is related to the SIMD length and the
vectorization strategy used in the LDPC decoder~\cite{Cassagne2021}. In
Table~\ref{tab:dvbs2}, for each configuration and strategy, the pipeline
decomposition is detailed, including the number of little and big cores used,
the expected period, and its conversion to throughput metrics. Besides the
estimations, the real average Mb/s values are presented with their absolute and
relative differences to the expected values. Finally, the socket instant power
and energy per frame is given. Performance is also shown in
Figure~\ref{fig:dvbs2}.

\subsection{Achieved Throughput}

\begin{figure}[h]
  \centering
  \begin{subfigure}{1.0\columnwidth}
    \includegraphics[width=\columnwidth]{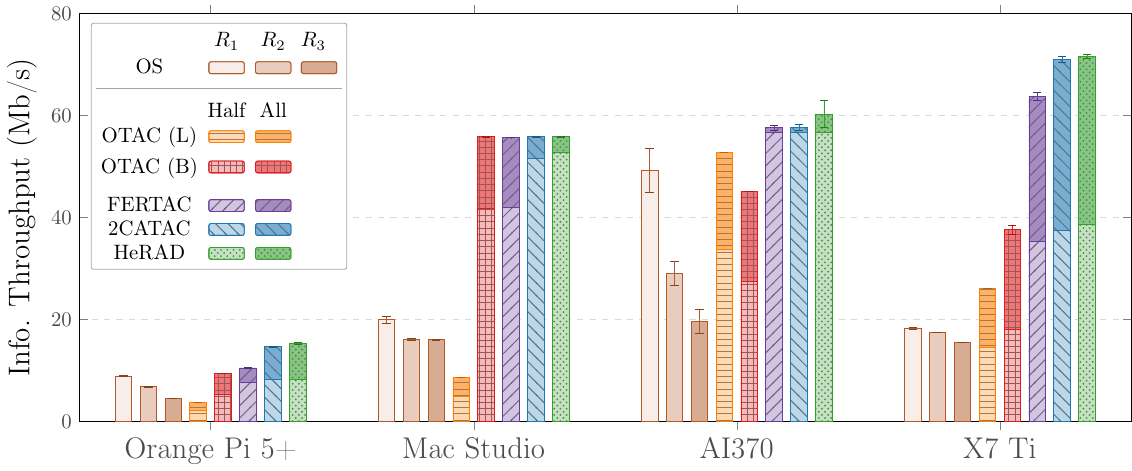}
    \caption{Throughput considering \emph{half} of the cores (light color) and
      \emph{all} of them (dark color). Min-max delimiters represent std dev.
      Higher is better.}
    \label{fig:dvbs2_thr}
  \end{subfigure}

  \vspace{0.4cm}
  \begin{subfigure}{1.0\columnwidth}
    \includegraphics[width=\columnwidth]{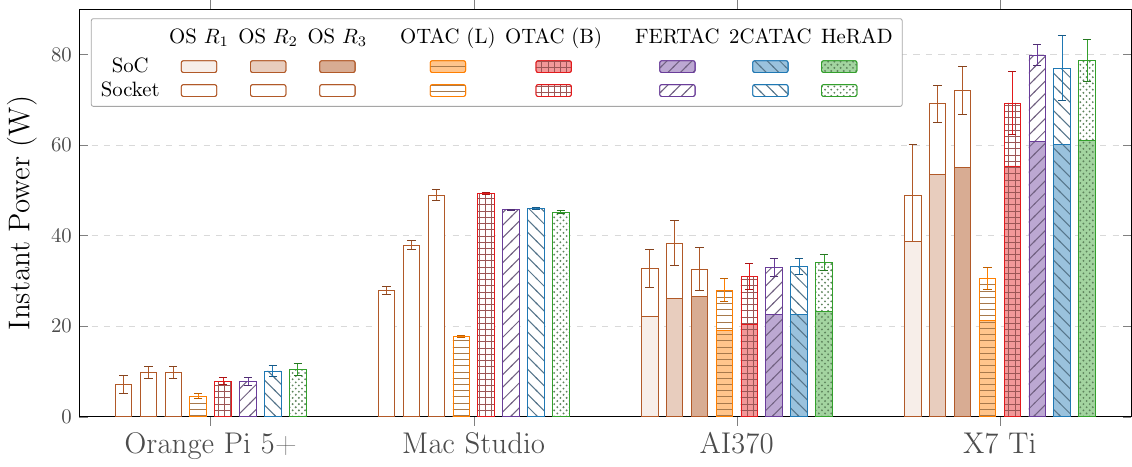}
    \caption{Instant power considering all the available cores. Min-max
      delimiters at the top of each bar represent the standard deviation. Lower
      is better.}
    \label{fig:dvbs2_power}
  \end{subfigure}

  \vspace{0.4cm}
  \begin{subfigure}{1.0\columnwidth}
    \includegraphics[width=\columnwidth]{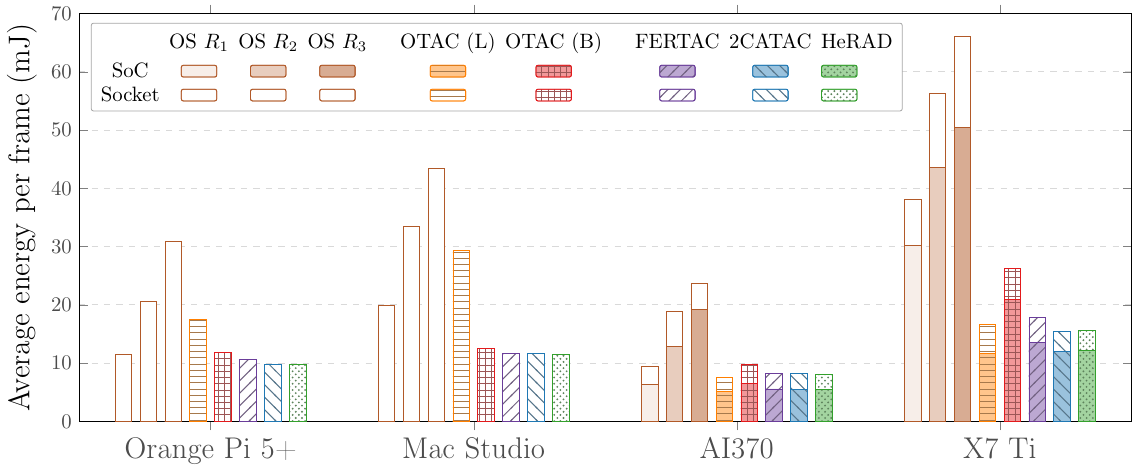}
    \caption{Energy per frame considering all the available cores. Lower
      is better.}
    \label{fig:dvbs2_energy}
  \end{subfigure}

  \caption{Achieved performance on the DVB-S2 receiver depending on the platform
    and scheduling strategy. The packed pinning policy is applied (except for
    the OS strategies, where the thread placement is left to it).}
  \label{fig:dvbs2}
\end{figure}

\fertac, \twocatac, and \herad propose different pipeline decompositions
(see Table~\ref{tab:dvbs2}), which result in different throughputs in practice.
The later are highlighted in Figure~\ref{fig:dvbs2_thr} and commented in the
following paragraphs.

An additional set of strategies named \emph{OS} is introduced in
Figure~\ref{fig:dvbs2_thr} (see plain brown bars). They rely on the operating
system scheduler: each task is allocated to one stage, thus leading to a fully
pipelined strategy with many threads that can be scheduled by the OS. $R_1$
means that the replication is disabled. This is the scheduling strategy used in
GNU~Radio~3~\cite{bloessl2019benchmarking}. Still, given that the limiting stage
in DVB-S2's receiver is replicable, we also consider strategies $R_2$ and $R_3$,
where \textit{every replicable stage} is replicated two or three times.
Strategies $R_1$, $R_2$, and $R_3$ generate 23, 33, and 43~threads,
respectively.

\textbf{Orange Pi 5+}.
For $R = (2_{\mathcal{B}},2_{\mathcal{L}})$, \twocatac and \herad achieve both
8.2~Mb/s with equivalent solutions. \fertac uses little cores for the two first
stages leading to a slower performance (7.6~Mb/s). \otac[B] lags behind with a
throughput of 5.4 Mb/s.

When all the cores are used ($R = (4_{\mathcal{B}},4_{\mathcal{L}})$), a
significant increase in throughput is seen for \fertac, \twocatac, and \herad.
This happens thanks to the replication of the slowest stage (see
$\mathcal{S}_{[8:10]}$). \herad's solution has fewer stages than \fertac's and
\twocatac's ones, thus achieving the best performance (reduced synchronizations
overhead). Its throughput is almost doubled when compared to the performance on
half of the cores (15.3 Mb/s).

\textbf{Mac Studio}.
When only half the cores are used ($R = (8_{\mathcal{B}},2_{\mathcal{L}})$),
\twocatac and \herad achieve the highest throughput by replicating $5\times$ the
stage with the two slowest tasks. \fertac ends up using both little cores for
the first stages, while using a single big core would have been better (see
$\mathcal{S}_{13}$ versus $\mathcal{S}_{[14:15]}$). As so, it lacks the extra
core by the end of the pipeline, leading to a lower throughput.

On the contrary, when all cores are available ($R = (16_{\mathcal{B}},
4_{\mathcal{L}})$), the strategies achieved similar throughput. With enough big
cores, performance gets limited by a sequential task, leaving many cores unused.
\otac[B] achieves similar throughput compared to \fertac, \twocatac, and \herad
but uses more big cores (10 vs 8 cores for \herad).

\textbf{AI370}.
The AMD Ryzen AI~9~HX~370 is the only tested CPU where using only the little
cores outperforms using only the big ones. Indeed, when $R = (2_{\mathcal{B}},
4_{\mathcal{L}})$, \otac[L] reaches 33.8 Mb/s while \otac[B] reaches only
27.4~Mb/s. This is mainly (but not only) due to the ratio between these two
types of cores. Zen~5 and Zen~5c are similar architectures that only differ from
their allowed maximum boost frequencies and L3 cache sizes. Using half of the
cores, \fertac, \twocatac, and \herad achieve roughly the same performance
(56.6~Mb/s).

When all the cores are used ($R = (4_{\mathcal{B}},8_{\mathcal{L}})$), we do not
see much improvement for \fertac and \twocatac (see $\mathcal{S}_{[28:29]}$).
\herad, meanwhile, is able to reach a higher throughput by better balancing the
cores between stages 3 and 4. In all cases, performance is limited by the
sequential stage 2.

Additionally, we see that this configuration shows high differences between
expected and obtained throughput results ($\geq 15\%$). This is mainly due to
biases in the model: as the profiling phase is run using a single core, its
clock frequency is allowed to be boosted to very high frequencies. On our
mini-PC, during profiling, we observed constant 4370~MHz and 3300~MHz
frequencies on the big and little cores, respectively. When we measured these
frequencies again while executing \herad ($\mathcal{S}_{30}$), we observed the
same frequency on the big core while the frequency of the 6 little cores ranged
from 1680 to 3300~MHz (up to 49\% lower). The little cores are clearly not able
to maintain the highest clock frequency when they are all used together, leading
to a significant difference between estimations and reality. However, one
positive side effect of maximizing the little cores usage is to relax the
pressure on the big core, thus allowing it to keep a stable and high frequency.

\textbf{X7 Ti}.
On this platform, there is a large gap between the schedules using half or all
cores. For $R = (3_{\mathcal{B}},4_{\mathcal{L}})$, \fertac, \twocatac, and
\herad use all cores to get to the point of being limited by the slowest
replicable tasks. Meanwhile, \otac[B] only gets to $47\%$ of \herad's
throughput, emphasizing the need of using both types of cores.

When $R = (6_{\mathcal{B}},8_{\mathcal{L}})$, all solutions from our scheduling
strategies ($\mathcal{S}_{[38:40]}$) have two consecutive replicated stages
using different types of cores. These required an extension to \spu to connect
replicated stages. This feature was unavailable before because, when using only
homogeneous resources, it is always better to merge consecutive replicated
stages~\cite{benoit2010complexity}. This enhancement has been released in
\spu~v1.6.0 (see Section~\ref{sec:runtime}).

Similarly to AI370, there are high differences between expected and obtained
throughput results ($\geq 18\%$). During the profiling phase, we observed a
frequency range of 4600-4800~MHz on the big core and 3600-3800~MHz on the little
one. Then, we measured these frequencies again while executing \herad
($\mathcal{S}_{40}$). We observed a frequency range of 3700-4600~MHz on the
5~big cores (between 23\% and 4\% lower) while the frequency of the 8 little
cores was stable around 3300~MHz (about 13\% lower). In other words, both types
of cores showed lower clock frequencies when using multiple cores in comparison
to the profiling phase.

\textbf{OS Strategies}.
Figure~\ref{fig:dvbs2_thr} shows that, except on AI370 with $R_1$, the OS
performances are disappointing. Increasing the replication factor ($R_2$ and
$R_3$) leads to even worse performance. These results clearly demonstrate the
interest of the proposed scheduling algorithms. Moreover, they highlight the
fact that the replication has to be combined with adapted scheduling strategies
to reveal its full potential. These conclusions can be tempered by considering
that generating as many stages as there are tasks leads to unnecessary
synchronization overhead. The performance of custom OS schedulers is studied
later in Subsection~\ref{subsec:os_scx}.

\subsection{Power and Energy Consumption}

\begin{figure*}[htp!]
  \centering
  \begin{subfigure}[b]{0.225\linewidth}
    \centering
    \includegraphics[width=1\columnwidth]{./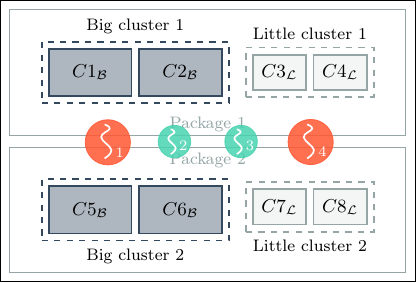}
    \caption{Loose policy.}
  \label{fig:pin_loose}
  \end{subfigure}
  \quad
  \begin{subfigure}[b]{0.225\linewidth}
    \centering
    \includegraphics[width=1\columnwidth]{./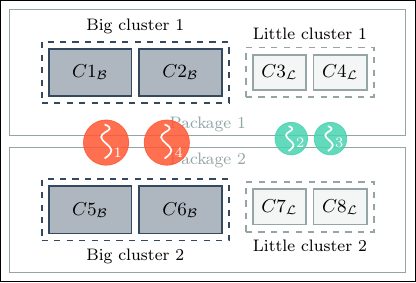}
    \caption{Guided policy.}
    \label{fig:pin_guided}
  \end{subfigure}
  \quad
  \begin{subfigure}[b]{0.225\linewidth}
    \centering
    \includegraphics[width=1\columnwidth]{./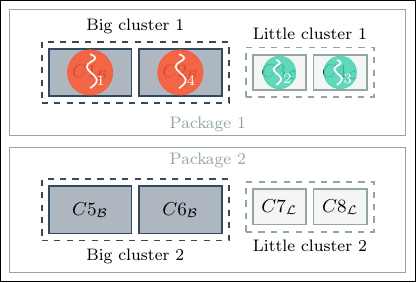}
    \caption{Packed policy.}
    \label{fig:pin_packed}
  \end{subfigure}
  \quad
  \begin{subfigure}{0.225\linewidth}
    \centering
    \includegraphics[width=1\columnwidth]{./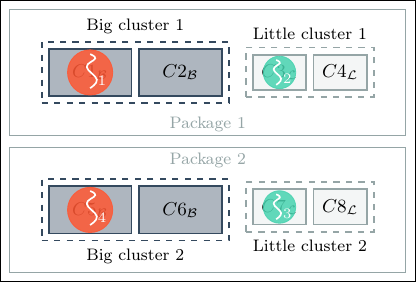}
    \caption{Distant policy.}
    \label{fig:pin_distant}
  \end{subfigure}
  \caption{Illustration of the evaluated pinning policies. Two clusters of two
    big cores ($C[(1,2),(5,6)]_\mathcal{B}$, dark gray plain boxes) and two
    clusters of two little cores ($C[(3,4),(7,8)]_\mathcal{L}$, light gray plain
    boxes) are considered. Big and little clusters 1 are regrouped into
    package~1, and big and little clusters 2 into package 2. In the examples,
    four threads need to be scheduled: thread 1 and 4 (red circles) represent a
    big amount of work and thread 2 and 3 (green circles) represent a small
    amount.}
  \label{fig:pin_policy}
\end{figure*}

\begin{figure*}[htp!]
  \begin{subfigure}[b]{0.225\linewidth}
    \centering
    \includegraphics[width=1\columnwidth]{./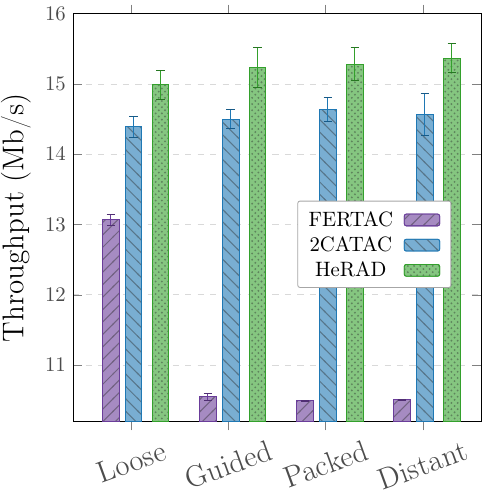}
    \caption{Orange Pi 5+.}
    \label{fig:pin_opi}
  \end{subfigure}
  \quad
  \begin{subfigure}[b]{0.225\linewidth}
    \centering
    \includegraphics[width=1\columnwidth]{./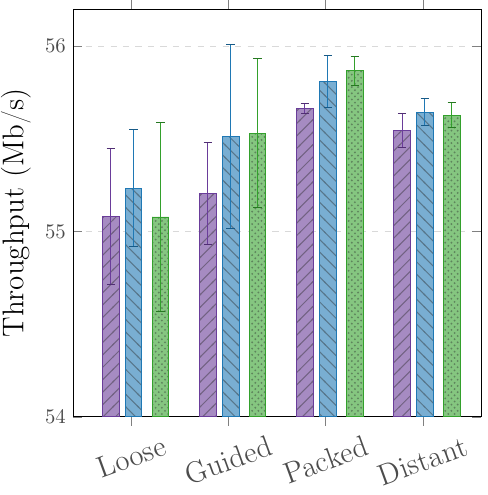}
    \caption{Mac Studio.}
    \label{fig:pin_m1u}
  \end{subfigure}
  \quad
  \begin{subfigure}[b]{0.225\linewidth}
    \centering
    \includegraphics[width=1\columnwidth]{./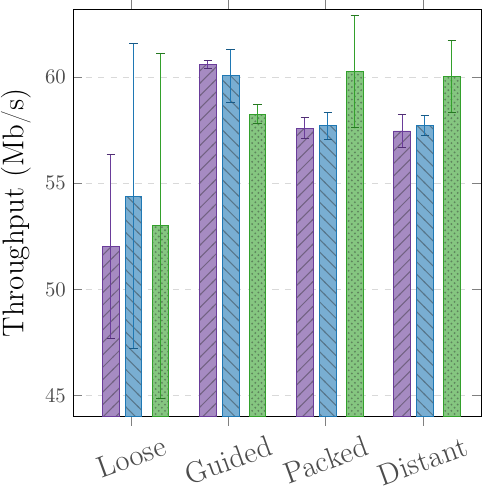}
    \caption{AI370.}
    \label{fig:pin_ai370}
  \end{subfigure}
  \quad
  \begin{subfigure}{0.225\linewidth}
    \centering
    \includegraphics[width=1\columnwidth]{./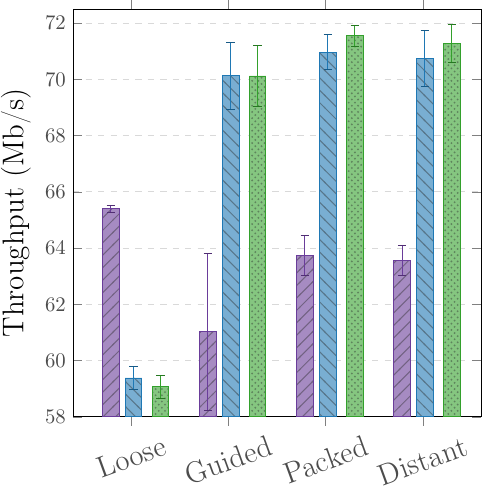}
    \caption{X7 Ti.}
    \label{fig:pin_x7ti}
  \end{subfigure}

  \caption{Throughput depending on the platform, the pinning policy and the
    heterogeneous scheduling algorithm. Each time, the solutions including all
    the cores are used. y-axis does not start from 0 and the scales are
    different for each plot. Min-max delimiters at the top of each bar
    represent the std dev. Higher is better.}
  \label{fig:pin_result}
\end{figure*}

Figure~\ref{fig:dvbs2_power} presents the instant power consumption (in Watts)
while running the DVB-S2’s receiver depending on the platform and the scheduling
strategy. This information, when combined with the throughput achieved on the
different scenarios, leads us to find the  energy consumed to compute one frame
of 14232 information bits, i.e., the energy efficiency of the combined platform
and scheduling algorithm. This is illustrated in Figure~\ref{fig:dvbs2_energy}.

\textbf{Orange Pi 5+}.
On this platform, the instant power consumed by the proposed schedules is mainly
correlated with their achieved throughputs. However, the energy efficiency of
\fertac ($\mathcal{S}_{8}$, 10.6~mJ), \twocatac ($\mathcal{S}_{9}$, 9.9~mJ), and
\herad ($\mathcal{S}_{10}$, 9.8~mJ) is significantly better than of \otac
($\mathcal{S}_{[6:7]}$, 17.5 and 11.8~mJ). It demonstrates that the
heterogeneous schedulers are at the same time the fastest and the most energy
efficient ones here.

\textbf{Mac Studio}.
\fertac, \twocatac, and \herad ($\mathcal{S}_{[18:20]}$, 11.5-11.7~mJ) show
similar energy efficiency. In the meantime, \otac[B] is less energy efficient
($\mathcal{S}_{17}$, 12.6~mJ), even though its throughput is similar to the
aforementioned strategies (55.9~Mb/s). This comes from its use of extra big
cores (instead of profiting from the more energy-efficient ones). This type of
result helps us validate our approach to minimize power consumption by using
\textit{as many little cores as necessary} (see Section~\ref{sec:def}).

\textbf{AI370}.
In this platform, \otac[L] appears as the most energy efficient solution
($\mathcal{S}_{26}$), consuming about 7.6~mJ per frame using only little cores.
Heterogeneous strategies, such as \herad ($\mathcal{S}_{30}$), use one extra big
core to achieve higher throughput (14\% higher, see Table~\ref{tab:dvbs2}) while
paying the price of a higher power demand (21\% higher), leading to a
consumption of 8.0~mJ per frame (8\% higher). In a situation where throughput
requirements may be lower, one could choose to use only little cores (with
\otac[L]) to improve energy efficiency.

\textbf{X7 Ti}.
Although the highest throughputs were achieved in this platform, it also showed
some of the worst energy consumption per frame. \otac[L] shows one of the best
energy efficiencies in the platform ($\mathcal{S}_{36}$, 16.7~mJ). It is
slightly outperformed by \twocatac and \herad ($\mathcal{S}_{[39:40]}$, 15.4 and
15.7~mJ). Differently from the AI370 platform, here the difference in energy
efficiency also comes with a big difference in throughput when using the
different heterogeneous resources, with strategies achieving throughput almost
three times higher (71.5~Mb/s) than \otac[L] (26.1~Mb/s).

\textbf{OS Strategies}.
The OS strategies provided disappointing results in terms of energy efficiency
in most scenarios. On all platforms, the instant power consumption increased
significantly with $R_2$ and $R_3$. Even more, this power increase usually came
with a decrease in throughput, leading to energy efficiency results far inferior
to the other strategies. This highlights the fact that the strategy of
generating several threads to improve efficiency is not working in these
platforms (at least with the Linux scheduler). The performance of custom OS
schedulers is studied later in Subsection~\ref{subsec:os_scx}.

\subsection{Pinning Policies} \label{subsec:pinning}

In this section, four different strategies for thread placement (also called
thread pinning/mapping) on the cores are studied. The pipeline decomposition in
stages is given by \fertac, \twocatac, and \herad. Then, each stage is run into
a thread. The placement of these threads onto cores is evaluated according to
the four policies described below.

\textbf{Loose}. This policy leaves the thread placement to the OS scheduler
(\textit{CFS} for Orange Pi 5 and \EEVDF for other platforms).
Nothing is done on the runtime layer and the OS scheduler can map the
threads wherever it wants (\emph{one to all} strategy). This is illustrated in
Figure~\ref{fig:pin_loose}.

\textbf{Guided}. In this policy, a thread is pinned to a set of cores (\emph{one
to many} strategy). If the stage has been made for little cores, the
corresponding set is all the available little cores. The same rule applies for
big cores. For instance, in Figure~\ref{fig:pin_guided}, thread~1 can be mapped
on cores $C[1,2,5,6]_\mathcal{B}$ but not on $C[3,4,7,8]_\mathcal{L}$. Thus,
thread~1 can be freely migrated on the big cores by the OS.

\textbf{Packed}. Each thread is pinned to a specific core (\emph{one to one}
strategy), leaving no freedom to the OS scheduler. In the packed strategy (also
sometimes referred to as \emph{compact} in the literature), the cores are
allocated in a fixed ascending order from the smallest to the biggest core
identifier. For instance, in Figure~\ref{fig:pin_packed}, thread 2 and 3 are
pinned to $C3_\mathcal{L}$ and $C4_\mathcal{L}$, respectively. These two cores
are on the same cluster, meaning that they are physically close to each other.
The distance between the core identifiers represents the locality between them.
For instance, $\dist(C1_\mathcal{B}, C3_\mathcal{L}) = 2$ and
$\dist(C2_\mathcal{B}, C6_\mathcal{B}) = 4$. In general, this strategy intends
to reduce the communication and synchronization overheads between consecutive
stages in the pipeline.

\textbf{Distant}. With this policy, each thread is pinned to a specific core
(\emph{one to one} strategy) but, contrary to packed, the distance between the
cores is maximized. For instance, in Figure~\ref{fig:pin_distant}, thread~1 is
pinned to the first core ($C1$) of cluster~1 and thread~4 is pinned to the first
core ($C5$) of cluster~2. In other words, a round-robin distribution over the
clusters of the same type is performed (same applies for packages if any). This
strategy is intended to increase the memory throughput as the caches are
generally replicated over the clusters and packages. However, it can slow down
the communication and synchronization between the threads of consecutive stages.

We have implemented these policies using \emph{hwloc}~\cite{broquedis2010hwloc}.
It provides a low level and hierarchical interface to pin the threads. Indeed,
hwloc generates core numbering based on \emph{core to core} latency, cache
hierarchy, and so on. This is packaged in our runtime starting from \spu v1.8.0.

Figure~\ref{fig:pin_result} presents the resulting throughput depending on the
pinning policies. In general, the packed and distant policies are giving the
best performances (and the lowest variations). We only considered the packed
policy in the previous sections as, in general, it limits the number of clusters
and packages used. Consequently, it minimizes the energy consumption.

\textbf{Orange Pi 5+}.
On this platform, guided, packed and distant policies perform similarly.
However, the loose policy results in a higher throughput with \fertac
($\mathcal{S}_8$ @ 13.0 Mb/s). This happens because the operating system can pin
the limiting stage 1 to big cores. Even if this is improving performance, this
solution is inferior compared to \twocatac ($\mathcal{S}_9$) and \herad
($\mathcal{S}_{10})$.

\textbf{Mac Studio}.
The loose and guided policies here perform worse than the packed and distant
ones on average. The fact that the operating system can freely migrate the
threads to different cores leads to high variations compared to the \emph{one to
one} strategies (context switches and data copies into the caches). That being
said, the differences in performance in this platform are somewhat minimal, with
all average throughputs falling in the range of 55 and 56~Mb/s.

\textbf{AI370}.
Like on the Mac Studio, the loose policy achieves lower performance with high
variations. On the other hand, the guided policy, combined to \fertac and
\twocatac ($\mathcal{S}_{[28:29]}$) gives some of the best results. It is
counterintuitive, but we think this is due to the fact that the guided
performance is not limited to the use of 7 resources (1 big and 6 little cores).
Even if the number of threads is fixed to 7, the OS is free to pin the threads
to more cores (\emph{one to many} strategy). In this particular case, it leads
to a better use of the resources (and maybe unlocks higher clock frequencies).
However, it is unclear if this relaxed strategy also has a positive impact on
energy efficiency.

\textbf{X7 Ti}.
For \fertac ($\mathcal{S}_{38}$), the loose pining policy outperforms the other
ones. This is primarily due to the fact that stages intended for the big cores
can be scheduled on the little cores, which severely degrades performance.
\fertac is the only solution to use all the 14 cores, and it includes 7~stages
in its pipeline. \twocatac ($\mathcal{S}_{39}$) and \herad ($\mathcal{S}_{40}$)
use 13~cores and have a 6-stage solution. This enables the OS to take advantage
of more resources by balancing the threads on the available cores. However, like
in Orange~Pi~5+, \fertac is clearly the worst solution when compared to
\twocatac and \herad, as they reach much higher throughput when combined with
the packed or distant policies (65.5~Mb/s for \fertac-loose versus 71.5~Mb/s for
\herad-packed).

\subsection{Investigating OS Scheduling} \label{subsec:os_scx}

\begin{figure}[h!]
  \begin{subfigure}{1.0\columnwidth}
  \centering
    \includegraphics[width=0.8\columnwidth]{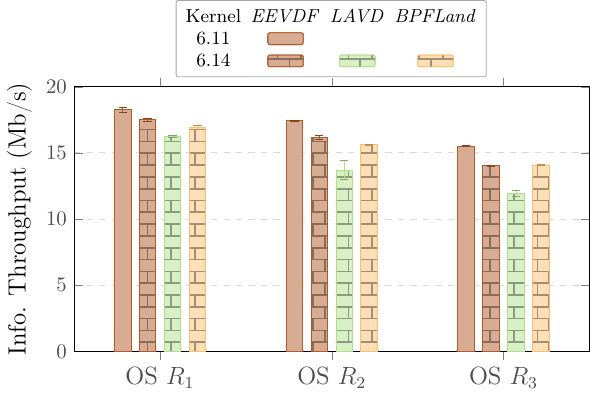}
  \caption{Throughput achieved across different OS schedulers. Higher is
    better.}
  \label{fig:sxc_thr}
  \end{subfigure}

  \vspace{0.4cm}
  \begin{subfigure}{1.0\columnwidth}
  \centering
    \includegraphics[width=0.8\columnwidth]{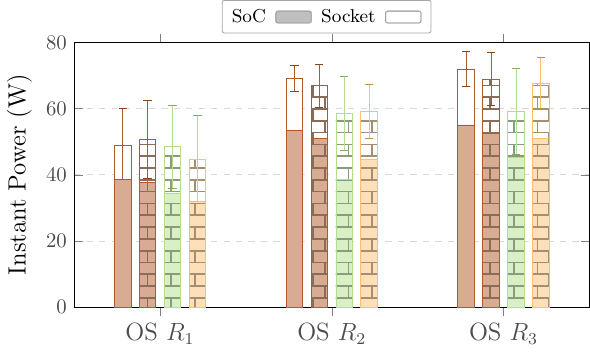}
    \caption{Instant power measured across different OS schedulers. Lower is
      better.}
    \label{fig:scx_power}
  \end{subfigure}

  \vspace{0.4cm}
  \begin{subfigure}{1.0\columnwidth}
  \centering
    \includegraphics[width=0.8\columnwidth]{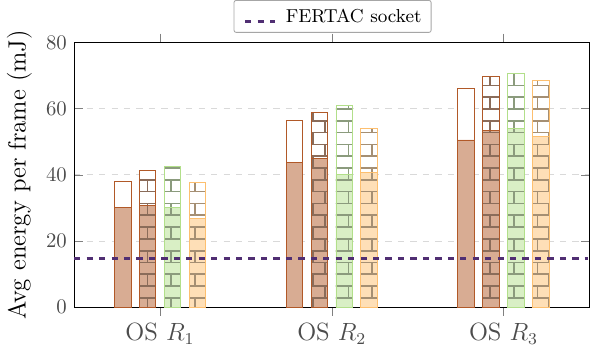}
    \caption{Energy per frame across different OS schedulers. Values are
      compared to \fertac, the least efficient static strategy. Lower is better.}
    \label{fig:scx_energy}
  \end{subfigure}
  \caption{Achieved performance on the DVB-S2 receiver for the OS strategies
    on the X7 Ti platform for the different system schedulers (\EEVDF is tested
    for both kernel versions $6.11$ and $6.14$).}
  \label{fig:scx}
\end{figure}

As we saw in the previous subsections, the OS strategies provided disappointing
results in most scenarios for the default Linux scheduler (\EEVDF or \CFS for
old Linux kernels). In this subsection, we propose to study the impact of
different OS schedulers on the DVB-S2 throughput and energy consumption.

Since Linux kernel $6.12$, a new framework, called
\schedext~\cite{hodges2024scheduling} (short for scheduler extension),
has been mainlined. It is a scheduler class whose behavior can be defined by a
set of BPF programs to replace the default scheduler or to be combined with it.
We decided to evaluate two of the most promising \schedext
implementations:
\begin{itemize}
  \item \textit{Latency Aware Virtual Deadline} (\LAVD) proposed by Valve and
    used in portable gaming consoles where energy efficiency matters (typically
    used in SteamOS but not only). Additionally, it has also been recently used
    by Meta on its servers.
  \item \BPFLand proposed by an Nvidia system engineer and targeting low latency
    and high responsiveness. Consequently, it is mainly used in Linux
    distributions for gaming like CachyOS. However, we believe it is also
    relevant for the studied streaming applications where latency matters.
\end{itemize}
\LAVD and \BPFLand are heterogeneous-aware and try to optimize energy efficiency
by building new CPU domains based on information collected in sysfs. To evaluate
and compare these OS schedulers, we chose to target the X7 Ti platform, where
\EEVDF (no extension enabled) performed poorly. To do so, we updated the kernel
to version $6.14$ (on Ubuntu 25.04).

As shown in Figure~\ref{fig:sxc_thr},
in terms of throughput, \EEVDF performs on average $15\%$ better than \LAVD
across the three strategies, and $2\%$ more than \BPFLand. We also observe
a slight difference between kernel versions $6.11$ (used in the previous
subsection) and $6.14$, where the older version achieves $5\%$ higher
performance. \EEVDF aims to maximizes instruction throughput by giving higher
priority to p-cores, while the \schedext schedulers focus more on
throughput-energy tradeoffs. As shown in Figure~\ref{fig:scx_power},
\BPFLand reduces power by $15\%$ compared to \EEVDF for $R_1$ and $R_2$
strategies, and by $2.2\%$ for $R_3$. For \LAVD, we observe a reduction of $7\%$
for $R_1$ and $15\%$ for $R_2$ and $R_3$. As a result,
Figure~\ref{fig:scx_energy} shows that \BPFLand is $10\%$ more energy efficient
than \EEVDF for $R_1$ and $R_2$, as the throughput degradation is balanced by
greater power savings. On the other side, \LAVD energy efficiency is not
improved compared to \EEVDF. This is mainly due to the fact that the throughput
degradation is more significant than for \BPFLand.

To summarize, the \schedext schedulers are interesting to investigate. As shown
earlier, \BPFLand slightly improves energy efficiency compared to \EEVDF.
However, in our experiments, we did not fully explore all the parameters
provided by \LAVD and \BPFLand. We believe their efficiency can be further
improved with fine tuning. Nevertheless, OS schedulers have limited knowledge
about the running application, which is why they cannot fully compete with the
proposed static strategies, as shown with \fertac baseline in
Figure~\ref{fig:scx_energy}.

\section{Concluding Remarks}\label{sec:conclusion}

\begin{figure}[htp]
  \centering
  \includegraphics[width=0.85\columnwidth]{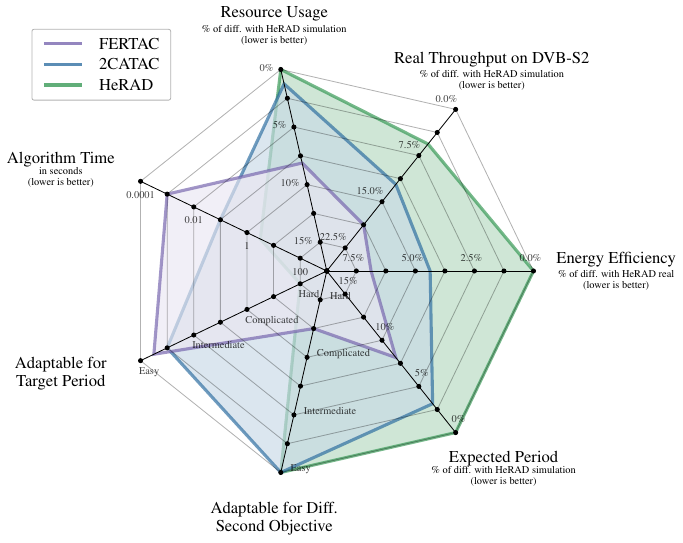}
  \caption{Advantages and limitations of the proposed strategies.}
  \label{fig:spider_algs}
\end{figure}

In this paper, we considered the problem of scheduling partially-replicable task
chains on heterogeneous multicore processors to optimize throughput (minimize
period) and power consumption (use as many little cores as necessary). We have
proposed two greedy strategies (\fertac that tries to use little cores as early
as possible, and \twocatac that tries to use both core types at each stage) and
one optimal dynamic programming strategy (\herad). Figure~\ref{fig:spider_algs}
summarizes their main characteristics based on our experimental evaluation and
analysis.

Using simulation, we have verified that \fertac and \twocatac are able to obtain
near-optimal schedules on average, with minor increases in period and resource
utilization. We have shown that the general quality of their schedules is
affected by characteristics of the platform (number of cores) and the task chain
(e.g., the number of replicable tasks).

Our real-world experiments with the DVB-S2 receiver task chain over four
heterogeneous multicore platforms have enabled us to validate our scheduling
strategies. Figure~\ref{fig:spider_algs} shows that, on average, the real
throughput difference compared to the best theoretical throughput (from \herad's
expected period) ranges between 12\% for \twocatac and 19\% for \fertac,
considering that \herad itself achieves 7\% differences of its target. We think
this as a positive result when moving from theory to practice. In terms of
throughput, our results have also emphasized the importance of using both types
of cores, as the optimal solution for homogeneous resources (\otac) usually
lagged behind our scheduling strategies. Considering the energy consumption, on
all the 40 tested solutions (except one on the AMD mini-PC), the heterogeneous
schedules are more energy efficient than \otac. This confirms our hypothesis
that using \emph{as many little cores as necessary} can be a win-win strategy
for both throughput and energy consumption. Considering \herad as a reference,
\fertac and \twocatac efficiency are only 7\% and 4\% lower, respectively.

As future work, we intend to take lessons from our experimental evaluation to
improve future solutions. We will profile the communication and synchronization
overheads on \spu to understand how they affect the schedules and include them
in our model, if necessary. We will study how to incorporate some of the
features of our best schedules (such as shorter pipelines) into our strategies.
We also plan to study frequency variations. We noticed that, when clock boost
technology is available, the difference between estimations and measurements is
significantly higher than others. Finally, we plan to evaluate the impact on
placing multiple stages on the same core to benefit from simultaneous
multithreading and very low communication overhead.

\section*{Acknowledgment}
{\small
\noindent
  This work has received support from France 2030 through the project named
  Académie Spatiale d'Île-de-France (\url{https://academiespatiale.fr/}) managed
  by the National Research Agency under bearing the reference ANR-23-CMAS-0041.
}

\bibliographystyle{elsarticle-num}
\bibliography{refs}

\end{document}